\newcommand*\circled[1]{\tikz[baseline=(char.base)]{
  \node[shape=circle,draw,inner sep = 0.3pt] (char) {#1};}}
\def\BibTeX{{\rm B\kern-.05em{\sc i\kern-.025em b}\kern-.08em
    T\kern-.1667em\lower.7ex\hbox{E}\kern-.125emX}}
\newtheorem{definition}{Definition}[subsection]
\newtheorem{theorem}{Theorem}[subsection]
\newcommand\myeq{\mathrel{\overset{\makebox[0pt]{\mbox{\normalfont\tiny\sffamily def}}}{=}}}
\newcommand{\mat}[1]{\bm{#1}}
\DeclarePairedDelimiter{\ct}{\llbracket}{\rrbracket}
\DeclareMathAlphabet{\mathpzc}{OT1}{pzc}{m}{it}
\begin{document}
\title{Biometric Verification Secure Against\\
Malicious Adversaries*
}

\author{
  \IEEEauthorblockN{
    Amina Bassit$^{1}$, Florian Hahn$^{1}$, Joep Peeters$^{1}$, Tom Kevenaar$^{2}$, Raymond Veldhuis$^{1}$, Andreas Peter$^{1}$
  }  
  \IEEEauthorblockA{ 
  \{a.bassit, f.w.hahn, r.n.j.veldhuis, a.peter\}@utwente.nl, \\
  joep@jjpeeters.nl, tom.kevenaar@genkey.com\\
  $^{1}$University of Twente, Enschede, The Netherlands\\
  $^{2}$GenKey Netherlands B.V., Eindhoven, The Netherlands}
}

\maketitle

\begin{abstract}
Biometric verification has been widely deployed in current authentication solutions as it proves the physical presence of individuals.
To protect the sensitive biometric data in such systems, several solutions have been developed that provide security against honest-but-curious (semi-honest) attackers. 
However, in practice attackers typically do not act honestly and multiple studies have shown drastic biometric information leakage in such honest-but-curious solutions when considering dishonest, malicious attackers.

In this paper, we propose a provably secure biometric verification protocol to withstand malicious attackers and prevent biometric data from any sort of leakage.
The proposed protocol is based on a homomorphically encrypted log likelihood-ratio-based (HELR) classifier that supports any biometric modality (e.g. face, fingerprint, dynamic signature, etc.) encoded as a fixed-length real-valued feature vector and performs an accurate and fast biometric recognition.
Our protocol, that is secure against malicious adversaries, is designed from a protocol secure against semi-honest adversaries enhanced by zero-knowledge proofs.
We evaluate both protocols for various security levels and record a sub-second speed (between $0.37$s and $0.88$s) for the protocol against semi-honest adversaries and between $0.95$s and $2.50$s for the protocol secure against malicious adversaries.
\end{abstract}

\begin{IEEEkeywords}
  Biometric verification, threshold homomorphic encryption, secure two-party computation, semi-honest and malicious models
\end{IEEEkeywords}

\section{Introduction}
\renewcommand*{\thefootnote}{\fnsymbol{footnote}}
\footnotetext[1]{This paper is a complete reworking and major expansion of our former paper~\cite{peeters2017fast} (including the addition of further co-authors), which investigated our homomorphically encrypted log likelihood-ratio-based classifier in the semi-honest attacker model only (see Section~\ref{subsection:shprotocol} of the present paper). In the paper at hand, we enhance our protocol from~\cite{peeters2017fast} with tailored zero-knowledge proofs to make it secure against malicious adversaries, which we prove following Canetti’s security model in the two-party case. Furthermore, we evaluate the performance of both protocols (i.e.,~\cite{peeters2017fast} and its enhanced version) on the three different datasets: BMDB, PUT and FRGC.}
\renewcommand*{\thefootnote}{\arabic{footnote}}
Biometric verification plays a pivotal role in current authentication technologies. 
Through measuring biometric modalities, such as faces, biometric verification provides evidence of the physical presence of individuals. 
In comparison to passwords, PIN codes and tokens, biometric data is irreversible and cannot be reissued once leaked or compromised. 
This categorizes it as highly sensitive data that is constantly subject to severe security threats.
The major challenges encountered in relation with biometric data are its storage and its processing that tend to be performed in an unprotected manner. 
Real-life examples confirm the seriousness of these security threats. 
On August 2019, \cite{biostarbreach} reported a biometric data breach in the security platform BioStar2 exposing facial recognition data and fingerprint data of millions of users.
On November 2020, \cite{tronicsXchangebreach} reported another biometric data breach in TronicsXchange's AWS S3 Bucket that was left unprotected, leaking approximately $10.000$ fingerprints. 
These incidents show the urgency of protecting biometric data that is Personally Identifiable Information (PII).
At the same time, many countries have legislations (e.g. the EU's GDPR) that govern how PII of their citizens should be handled, which includes the use of strong data protection technologies.

Biometric verification systems (e.g. multi-user access control) involve two protocols: enrollment and verification that include users, a client and a server as main entities.
The client represents the acquisition device, such as a biometric scanner. 
Its role is to capture the user's biometric reference data during the enrollment and the live \emph{probe} during the verification. 
The server, on the other hand, stores the biometric reference data together with some auxillary information in a \emph{template} during the enrollment and compares it with the live probe during the verification. 
The aim of protecting the biometric data throughout the entire verification process implies secure storage and secure processing which is achievable via homomorphic encryption.
On the one hand, homomorphic encryption offers flexibility in manipulating encrypted data without decryption but on the other hand this same flexibility makes tracking the computations a very difficult task especially when the parties may not be trusted.

From a security point of view in the context of biometrics, the client or the server could be compromised by an attacker who tries not to bypass the authentication but to leak sensitive biometric data that is either stored (the template) or freshly captured (the probe).
In an attempt to remain unnoticed, this attacker could either try to follow the biometric verification protocol as intended and use any information gained from the protocol execution to infer knowledge about a template or probe; or he could arbitrarily deviate from the protocol by following a specific strategy to achieve his desired adversarial goal. 
The security literature \cite{goldreich2009foundations} describes the first type as a \emph{semi-honest} attacker and the second type as a \emph{malicious} attacker.
The protection against malicious (resp. semi-honest) attackers is considered as the strongest (resp. basic) achievable security level.

The security of state-of-the-art biometric verification systems can be split in two categories: \emph{semi-honest client and semi-honest server} \cite{upmanyu2010blind, yasuda2013packed, chun2014outsourceable, cheon2016ghostshell, im2016privacy} and \emph{malicious client and semi-honest server} \cite{shahandashti2015reconciling, vsedvenka2014secure, gunasinghe2017privbiomtauth, im2020practical}. 
Unfortunately, all these systems show drastic biometric information leakage when considering a malicious server as described in~\cite{simoens2012framework}.  
For instance, in the case of the system studied in~\cite{abidin2014security}, a malicious server can send encrypted computations of his own choice instead of the ones dictated by the protocol.
Although the studied verification protocols in \cite{abidin2014security} employ encryption schemes based on the ring-LWE problem, this attack enables a server to learn the biometric template in at most $2N-\theta$ queries (where $N$ the bit-length of a biometric template and $\theta$ the probe-template comparison threshold). 
Both \cite{simoens2012framework} and \cite{abidin2014security} emphasize that a biometric verification assuming a semi-honest server or client puts the biometric data in peril.

Thus far, THRIVE~\cite{karabat2015thrive} is the only work that tried to address this problem by studying the case of both malicious client and malicious server. 
While the authors made some steps forward in this direction, the resulting protocol is unfortunately not secure in the malicious model as their proof follows the definition of the semi-honest model only and does not capture malicious attack strategies.
In Section~\ref{section:RelatedWork}, we elaborate on why THRIVE does not achieve security against malicious adversaries.
Further, we demonstrate concrete attacks against THRIVE achieving security violations in the malicious model. 
Hence, the problem of achieving a solution that is secure against a malicious client and a malicious server remains open.

For the sake of practicality, biometric verification protocols require a verification time limited to a few seconds. 
While existing systems achieve this requirement only for the elementary security level against semi-honest attackers, it is vital to maintain this runtime in the order of seconds also for stronger security levels.

In this paper we confront the security-efficiency trade-off by presenting a practical biometric verification protocol that achieves security in the malicious model.  
We adopt the data-driven biometric recognition approach based on the log likelihood ratio (LLR) classifier \cite{bazen2004likelihood}, known for its optimality in the Neyman-Pearson sense. 
Our approach, called \emph{homomorphically encrypted log likelihood-ratio-based} (HELR) classifier, allows us to speed up the biometric recognition by pre-computing the classifier and storing it into lookup tables. 
Thus when applying an encryption layer, the recognition performance does not degrade in comparison to the unprotected classifier. 
Our HELR classifier supports any biometric modality that can be encoded as a fixed-length real-valued feature vector (such as faces) and does not support the one encoded as a binary-valued feature vector such as irises.
Based on our accurate (EER between $0.25\%$ and $0.27\%$ for faces) HELR classifier, we first present a fast (between $0.37$s and $0.88$s depending on the desired bit security level) biometric verification protocol secure against a semi-honest client and server.
Then we address the above-mentioned problem by proposing a practical (between $0.95$s and $2.50$s depending on the desired bit security level) biometric verification protocol secure against both a \emph{malicious client and malicious server}.
The template is encrypted using threshold homomorphic encryption (THE) such that neither the client nor the server can decrypt it on its own. 
The probe is encrypted by the client only using homomorphic encryption (HE). 
As a matter of fact, encryption alone does not guarantee neither computation correctness nor security in the presence of malicious adversaries. 
Therefore, we force the client and the server to follow the protocol steps of our semi-honest construction by using zero-knowledge proofs (ZK-proofs) to check and keep track of the computations. 
To realize this, integer-oriented THE and HE schemes with compatible ZK-proofs are required. 
We use the additive homomorphic ElGamal encryption scheme and adapt three sigma protocols to suit our construction. 
The proposed protocol is secure and protects the biometric information from any sort of leakage in the presence of malicious adversaries; and imposes to both the client and server to follow the protocol honestly; if one of them tries to misbehave the other entity will detect it and terminate the protocol. 
The latter prohibits sabotaging the protocol by any malicious attacker and also distinguishes between such attempt and a \emph{no match}.

In summary, we make the following contributions:
\begin{itemize}
  \item We introduce the HELR lookup tables that speed up and simplify the biometric recognition reducing it to three elementary operations (i.e. selection, addition and comparison) paving the way for applying an encryption layer over these operations without degrading the biometric accuracy.  
  \item We design two biometric verification protocols that perform the recognition under encryption preventing biometric information (i.e. template, probe and score) from leakage in the presence of semi-honest and malicious adversaries. 
  \item We prove the security of our protocols, evaluate their computational performance and show that we achieve high efficiency while maintaining a strong security level.  
\end{itemize}

\section{Preliminaries}

In this work, we denote by $\mat{x} =\left(x_1, \cdots, x_k\right)$ a $k-$dimensional feature vector, $ \mat{X} = (X_1, \cdots, X_k)$ its corresponding multivariate random variable (from which the features are sampled), $f_{\mat{X}}$ its corresponding probability density function (PDF) and $f_{\mat{X},\mat{Y}}$ the joint probability density function of $\mat{X}$ and $\mat{Y}$.

\subsection{Biometric Background}\label{subsection:bioBackground}

\textbf{Overview:} Biometric verification systems check the authenticity of a claimed user identity by exploiting the fact that biometric traits discriminatively characterize individuals. 
Figure~\ref{fig:biometric-system} depicts the main phases of those systems.
The captured biometric raw measurement goes through a feature extraction step to yield a feature vector. 
During the enrollment phase, the extracted feature vector represents the \emph{template} and is stored along with the user's identity in the system's database.
Later in the verification phase, a user claims to have a certain identity.
The extracted feature vector, in this phase, is called a \emph{probe}.
Subsequently, the system compares the template obtained in the enrollment with the probe by measuring the similarity, in terms of a score, between both feature vectors.
In case the score exceeds a preset threshold $\theta$, the system considers the user genuine and outputs \emph{match} otherwise it considers the user an impostor and outputs \emph{no match}.
\begin{figure}[!h]
  \centering
   \includegraphics[width=\columnwidth]{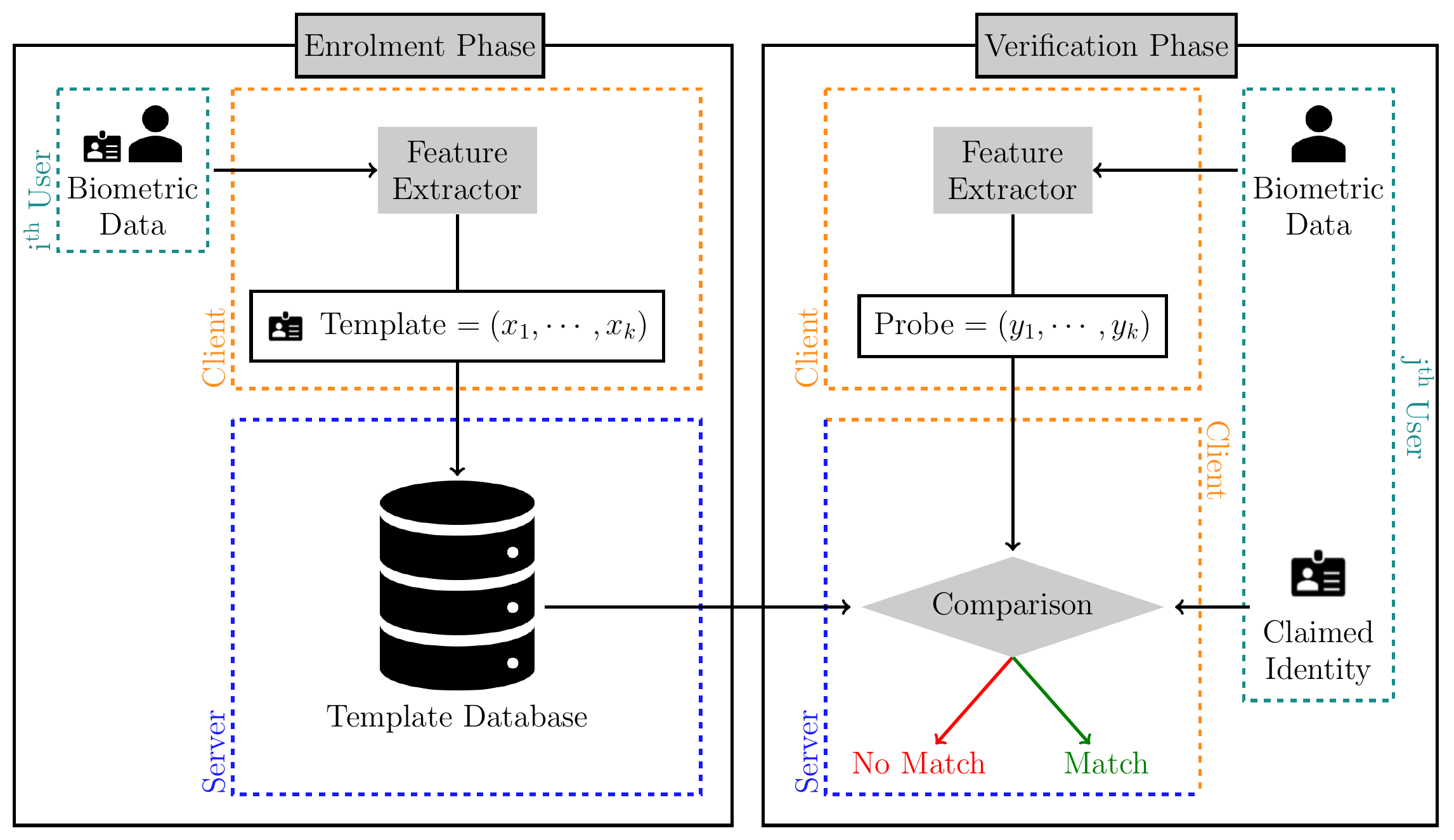}
  \caption{Overview of a multi-user biometric verification system where the comparison is performed between the client and the server. Note that $(x_1, \cdots, x_k)$ and $(y_1, \cdots, y_k)$ represent the feature vectors of the template and a probe.}
  \label{fig:biometric-system}
\end{figure}

\textbf{Log Likelihood Ratio Classifier:} In \cite{bazen2004likelihood}, the authors show the optimality in the Neyman-Pearson sense of the log likelihood ratio as similarity measure comparing two fixed-length feature vectors (one representing the template and the other representing the probe).
The decision comparison is made based on the hypothesis that the system is dealing with the same person (i.e. \emph{genuine verification}) versus the hypothesis that it is dealing with a different person (i.e. \emph{impostor verification}). 
Consider $\mat{x}$ (resp. $\mat{y}$) a biometric feature vector from the enrollment (resp. verification) and $\mat{X}$ (resp. $\mat{Y}$) its corresponding multivariate random variable where their $X_i$ (resp. $Y_i$) are assumed to be independent and normally distributed. 
For the i$^{\text{th}}$ feature, the distribution of the genuine verification is defined by
\begin{equation}\label{eq:gendist}
  P_{X_{i}, Y_{i}}(x_i,y_i|\text{gen}) = f_{X_{i}, Y_{i}}(x_i,y_i)
\end{equation}
a cigar-shaped 2D Gaussian distribution (see Figure~\ref{fig:helr}), whereas the distribution of the impostor verification is defined by 
\begin{equation}\label{eq:impdist}
  P_{X_{i}, Y_{i}}(x_i,y_i|\text{imp}) = f_{X_{i}}(x_i) \cdot f_{Y_{i}}(y_i)
\end{equation}
a circular-shaped 2D Gaussian distribution (see Figure~\ref{fig:helr}).
From these distributions, the similarity between $x_i$ and $y_i$ is measured by calculating the log likelihood ratio (LLR) score.
\begin{align}\label{eq:llr}
  \displaystyle s(x_i,y_i) 
  &= \log \left( \frac{P_{X_{i}, Y_{i}}(x_i,y_i|\text{gen})}{P_{X_{i}, Y_{i}}(x_i,y_i|\text{imp})} \right) 
\end{align}
The LLR classifier is based on the data-driven approach since it requires the knowledge of $f_{X_{i}}$, $ f_{Y_{i}}$ and $f_{X_{i}, Y_{i}}$ that are in practice estimated from a dataset representative of the relevant population.  
Also, this approach assumes that the features were extracted in a statistically independent and identically distributed (i.i.d.) manner.
In practice, this can be achieved by applying a combination of principle component analysis (PCA) and linear discriminant analysis (LDA) as in \cite{bazen2004likelihood} and \cite{10.1007/978-3-642-33712-3_41}.
As a consequence, the final similarity score between two feature vectors $\mat{x}$ and $\mat{y}$ is given by the sum of the individual LLR scores $s(x_i,y_i)$ since the independence between features is assumed.
\begin{equation}\label{eq:finalscore}
  \displaystyle s(\mat{x},\mat{y}) = \sum_{i=1}^{k} s(x_i,y_i)
\end{equation}
The verification system defines a threshold $\theta$ based on which only the final scores that are above $\theta$ are counted as a \emph{match} whereas those below are counted as a \emph{no match}.

\textbf{Performance assessment:} The performance of biometric verification systems is tightly related to the performance of their core comparison algorithm (called comparator) and expressed in terms of False Non-Match Rate (FNMR), that is the probability that the comparator decides no match for two samples coming from the same individual; and False Match Rate (FMR) that is the probability that the comparator decides match for two samples coming from two different individuals.
An infinitely high threshold $\theta$ results in FNMR $=1$ and FMR $=0$, lowering the threshold decreases (increases) the FNMR (FMR), respectively towards an infinitely low threshold $\theta$ for which FNMR $=0$ and FMR $=1$.
This trade-off can be graphically illustrated by a decision error trade-off (DET) curve representing the FNMR as a function of the FMR. 
The Equal Error Rate (EER) denotes the point on the curve where FNMR and FMR are equal.
FNMR@FMR $= 0.1\%$ denotes the point on the curve where FMR $= 0.1\%$.
The system's threshold $\theta$ is set to meet an amount of acceptable FMR;
often at FMR $= 1\%$ or $0.1\%$.

\subsection{Additively Homomorphic ElGamal}\label{subsection:elgamal}
We briefly recall the additively homomorphic ElGamal and its $(2,2)$-threshold version \cite{10.5555/1754542.1754554}.
Let $q$ be a large prime, $\mathbb{G}$ a group of order $q$ and generator $g$.
Let $k = g^s$ be the public key corresponding to the private key $s$.
The encryption of a message $m \in \mathbb{Z}_q$ is $[m]\myeq (g^r,g^m \cdot k^r)$ where $r\in\mathbb{Z}_q$ is random.
The decryption of the ciphertext $[m]$ is the discrete log of $g^m \cdot k^r\cdot (g^r)^{-s}$.
The additively homomorphic ElGamal is secure against Indistinguishable Chosen-Plaintext Attack (IND-CPA) \cite{10.5555/1754542.1754554} under the Decisional Diffie-Hellman (DDH) assumption.
It supports the following operations: ciphertext multiplication $\left[ m_1 \right] \cdot \left[ m_2 \right] = \left[ m_1 + m_2 \right] $, re-randomizing with randomness $r_0$: $\left[ m \right] \cdot \left[ 0 \right] = \left( g^{r+r_0}, g^{m+0} \cdot k^{r+r_0} \right) = \left[ m \right] $, blinding with blinding value $r_d$: $\left[ m \right]^{r_d} = \left( g^{r_d \cdot r }, g^{r_d \cdot m} \cdot k^{r_d \cdot r } \right) = \left[ r_d \cdot m \right]$ and subtraction of two ciphertexts $\left[ m_1 \right] - \left[ m_2 \right] \myeq \left[ m_1 \right] \cdot \left[ m_2 \right]^{-1} = \left[ m_1 - m_2 \right]$.

For implementing a $(2,2)$-threshold additively homomorphic ElGamal, we use the technique described in \cite{pedersen1991threshold}.
Given the key pair $(pk_i,sk_i)$ for $i\in\left\{1,2\right\}$ such that $pk_i = g^{sk_i}$; $pk_{joint}\myeq pk_1 \cdot pk_2$ is the joint public key.
The encryption is performed under $pk_{joint}$ and denoted as $\ct{m}\myeq (g^r,g^m \cdot pk_{joint}^r)$.
The decryption of $\ct{m}$ comes into two stages.
First, each party $i$, using its private key $sk_i$, produces a partial decryption $[m]_{i}\myeq (g^r, g^m \cdot pk_{joint}^r\cdot (g^r)^{-sk_i})$.
Then by combining the exchanged partial decryptions, the final decryption is the discrete log of $g^m \cdot pk_{joint}^r \cdot (g^r)^{-sk_1} \cdot (g^r)^{-sk_2}$.
Notice here that a partial decryption $[m]_{1}$ is a non-threshold ElGamal ciphertext encrypted under $pk_2$ and vice versa for $[m]_{2}$.
Threshold ElGamal is also IND-CPA secure \cite{fouque2001threshold} under the DDH assumption and supports all the above-mentioned operations.

\subsection{Zero-knowledge proofs}
ZK-proofs allow to prove a statement without revealing the secret.
In the literature \cite{hazay2010efficient}, ZK-proofs constructed from $\Sigma\text{-protocols}$ are efficient and flexible to fit the desired proof. 
It is also possible to combine them to prove the conjunction of several statements (called AND proofs). 
In Table~\ref{tab:sigmaplain}, we recall, from \cite{camenisch1998group}, the $\Sigma\text{-protocol}$, denoted as $\Sigma_{\text{Plain}}$, that proves the plaintext knowledge of an ElGamal ciphertext $[m]=(u,v)$.
It can be enhanced by the generic construction in \cite{hazay2010efficient} to transform it into a zero-knowledge proof of knowledge; that we denote as $\text{ZKPoK}_{\text{Plain}}$ and use later in Protocol Figure~\ref{fig:mal-biometric-verification}.
In the same protocol, we also use non-interactive ZK-proofs constructed from $\Sigma\text{-protocols}$ using the Fiat-Shamir transformation \cite{fiat1986prove}.
\begin{table}[h!]
    \centering
    \caption{$\Sigma_{\text{Plain}}$ protocol that proves the plaintext knowledge of an ElGamal ciphertext $[m]=(u,v)$} \label{tab:sigmaplain}
    \begin{tabular}{l|l|l|l}
      \toprule 
      \textbf{Commitment} & \textbf{Challenge} & \textbf{Response}&\textbf{Verification}\\
      \midrule 
      $U = g^{r'}$ & \multirow{2}{*}{$e\in \left\{0,1\right\}^t $} & $z_r= r'+e \cdot r $ & $g^{z_r} =^{?} U \cdot u^{e}$\\
      $ V = g^{m'} \cdot k^{r'}$ & & $z_m = m'+e \cdot m$ & $g^{z_m} =^{?} V \cdot v^{e}$\\
      \bottomrule 
    \end{tabular}
\end{table}

\section{Security model}\label{section:Secumodel}
In this work, we follow Canetti's security model \cite{canetti2000security} for malicious static adversaries in the special case of two parties.
We use the notations and extensions from \cite{cramer2000multiparty} where each party $P_i$ receives a secret input $x^{(s)}_{i}$ and a public input $x^{(p)}_{i}$ and returns a secret output $y^{(s)}_{i}$ and a public output $y^{(p)}_{i}$. 
Also, the adversary receives the public input and output of all parties.

\textbf{Real-world model} Let $\pi$ be a two-party protocol.
Let $ \vec{x} = \left( x^{(s)}_{1}, x^{(p)}_{1}, x^{(s)}_{2}, x^{(p)}_{2} \right)$ be the parties' inputs, $\vec{r} = \left( r_{1}, r_{2}, r_{\mathcal{A}} \right)$ be the parties' and the adversary's $\mathcal{A}$ random inputs and $a \in \{0,1 \}^*$ be the adversary's auxiliary input.
We assume that only one party, $P_j$ where $j\in \{1,2\}$, is corrupted at the time.
$\text{ADVR}_{\pi, \mathcal{A}} \left( k, \vec{x}, \{j\} , a, \vec{r}  \right)$ denotes the output of the adversary and $\text{EXEC}_{\pi, \mathcal{A}} \left( k, \vec{x}, \{j\} , a, \vec{r}  \right)_i$ the output of the party $P_i$ after a real-world execution of $\pi$ in the presence of the adversary $\mathcal{A}$ corrupting the party $P_j$.
\begin{align*}
  \text{EXEC}_{\pi, \mathcal{A}} \left( k, \vec{x}, \{j\} ,  a, \vec{r}  \right) = (
  \text{ADVR}_{\pi, \mathcal{A}} &\left( k, \vec{x}, \{j\} , a, \vec{r}  \right),\\
  \text{EXEC}_{\pi, \mathcal{A}} &\left( k, \vec{x}, \{j\} , a, \vec{r}  \right)_1, \\
  \text{EXEC}_{\pi, \mathcal{A}} &\left( k, \vec{x}, \{j\} , a, \vec{r}  \right)_2  )
\end{align*}

$\text{EXEC}_{\pi, \mathcal{A}} \left( k, \vec{x}, \{j\} , a \right)$ denotes the random variable 
$\text{EXEC}_{\pi, \mathcal{A}} ( k, \vec{x}, $ 
$ \{j\},a, \vec{r} )$ where $\vec{r}$ is chosen uniformly random and $\text{EXEC}_{\pi, \mathcal{A}} = \left\{ \text{EXEC}_{\pi, \mathcal{A}} \left( k, \vec{x}, \{j\} , a \right) \right\} _{k, \vec{x}, \{j\}, a }$ the distribution ensemble indexed by the security parameter $k\in \mathbb{N}$, the input $\vec{x}$, the corrupted party $P_j$ and the auxiliary input $a$.

\textbf{Ideal model} Let $f$ be a probabilistic two-party function computable in PPT defined as $f \big( k, x^{(s)}_{1}, x^{(p)}_{1}, x^{(s)}_{2}, x^{(p)}_{2}, r \big) = \big( y^{(s)}_{1}, y^{(p)}_{1}, y^{(s)}_{2},$
$ y^{(p)}_{2} \big)$ where $k$ is the security parameter and $r$ is the random input.
In the ideal model, each party $P_i$ sends its input $\left( x^{(s)}_{i}, x^{(p)}_{i} \right)$ to the trusted party $\mathbb{T}$ that computes $f$ on the inputs and a uniformly chosen random $r$ then returns to each party $P_i$ its output $\left(y^{(s)}_{i}, y^{(p)}_{i}\right)$.
Note that the malicious static adversary $\mathcal{S}_{\mathcal{A}}$, operating in the ideal execution, uses the real-world adversary $\mathcal{A}$, which corrupts the party $P_j$, as a subroutine.
At the beginning of the execution, $\mathcal{S}_{\mathcal{A}}$ sees the public values of both parties and the secret values of the corrupted party $P_j$ and also substitutes $P_j$'s input by values of his choice.
Again, we denote by $\text{IDEAL}_{f, \mathcal{S}_{\mathcal{A}}} \left( k, \vec{x}, \{j\} , a, \vec{r}  \right)_i$ the output of the party $P_i$ after an ideal execution of $f$ in the presence of the adversary $\mathcal{S}_{\mathcal{A}}$.
\begin{align*}
  \text{IDEAL}_{f, \mathcal{S}_{\mathcal{A}}} \left( k, \vec{x}, \{j\} ,  a, \vec{r}  \right) = (
  \text{ADVR}_{f, \mathcal{S}_{\mathcal{A}}} &\left( k, \vec{x}, \{j\} , a, \vec{r}  \right),\\
  \text{IDEAL}_{f, \mathcal{S}_{\mathcal{A}}} &\left( k, \vec{x}, \{j\} , a, \vec{r}  \right)_1, \\
  \text{IDEAL}_{f, \mathcal{S}_{\mathcal{A}}} &\left( k, \vec{x}, \{j\} , a, \vec{r}  \right)_2  )
\end{align*}
We denote by $\text{IDEAL}_{f, \mathcal{S}_{\mathcal{A}}} \left( k, \vec{x}, \{j\} , a \right)$ the random variable\\ $\text{IDEAL}_{f, \mathcal{S}_{\mathcal{A}}} \left( k, \vec{x}, \{j\} , a, \vec{r} \right)$ where $\vec{r}$ is chosen uniformly random and $\text{IDEAL}_{f, \mathcal{S}_{\mathcal{A}}} = \left\{ \text{IDEAL}_{f, \mathcal{S}_{\mathcal{A}}} \left( k, \vec{x}, \{j\} , a \right) \right\} _{k, \vec{x}, \{j\}, a }$ the distribution ensemble indexed by the security parameter $k\in \mathbb{N}$, the input $\vec{x}$, the corrupted party $P_j$ and the auxiliary input $a$.

\textbf{Hybrid model} In the $\left(g_1, \cdots, g_l \right)$-Hybrid model the execution of a protocol $\pi$ proceeds as in the real-world model, except that the parties have access to a trusted party $\mathbb{T}$ for evaluating the two-party functions $g_1, \cdots, g_l$.
These ideal evaluations proceeds as in the ideal-model.
As above, we define the following distribution ensemble 
\[
 \text{EXEC}^{g_1, \cdots, g_l}_{\pi, \mathcal{A}} = \left\{ \text{EXEC}^{g_1, \cdots, g_l}_{\pi, \mathcal{A}} \left( k, \vec{x}, \{j\} , a \right) \right\} _{k, \vec{x}, \{j\}, a } 
\]

The security in this model is defined by requiring that a real-world execution or $\left(g_1, \cdots, g_l \right)$-Hybrid execution of a protocol $\pi$ for computing a function $f$ should reveal no more information to the adversary than what the ideal evaluation of $f$ does, namely the output. 

\begin{definition}\label{def:malsec}
  Let $f$ be a two-party function and $\pi$ be a two-party protocol.
  We say that $\pi$ securely evaluates $f$ in the $\left(g_1, \cdots, g_l \right)$-Hybrid model if for any malicious static $\left(g_1, \cdots, g_l \right)$-Hybrid adversary $ \mathcal{A}$ corrupting one party, there exists an adversary $\mathcal{S}_{\mathcal{A}}$ operating in the ideal world such that 
  \[
    \text{IDEAL}_{f, \mathcal{S}_{\mathcal{A}} } \stackrel{c}{\approx}  \text{EXEC}^{g_1, \cdots, g_l}_{\pi, \mathcal{A}}
  \]
  where $\stackrel{c}{\approx}$ means the computational indistinguishability of ensembles, see Definition 3 in \cite{canetti2000security}.
\end{definition}

\section{Pre-computed HELR Classifier}\label{section:helr}
Recall from Section~\ref{subsection:bioBackground} the biometric comparison of two feature vectors using the LLR classifier is a data-driven approach where the parameters of the genuine and impostor distributions are estimated from a given dataset representative of the relevant population. 
For each feature, we draw the PDFs in Equation \eqref{eq:gendist} and Equation \eqref{eq:impdist} as estimated from the training dataset.
Assuming that the extracted features follow the Gaussian distribution, they are rendered i.i.d. by applying a combination of PCA and LDA as shown in \cite{bazen2004likelihood} and \cite{10.1007/978-3-642-33712-3_41}.  
Then we compute the LLR per feature as in Equation \eqref{eq:llr} in order to produce the final score in Equation \eqref{eq:finalscore} for each comparison.

The LLR in Equation \eqref{eq:llr} can be visualized as a function with two inputs (the i$^{\text{th}}$ feature, one from the template and one from a probe) and one output (individual score).
This function can be arranged into a lookup table where the rows' indexes represent the possible values of the first input (features from the template), the columns' indexes represent the possible values of the second input (features from a probe) and the cells contain the output (individual scores).
In order to produce such lookup table, a mapping from a continuous domain to a finite set is needed to limit the possible feature values allowing the storage of a representative score per cell.
This respective score is then quantized to an integer in order to facilitate the application of homomorphic encryption. 
An example of generating the HELR lookup table of one feature is given in Figure~\ref{fig:helr}.

\paragraph{Feature Quantization}\label{par:fq}
We describe the feature quantization procedure for the i$^{\text{th}}$ feature, the same is applied for the remaining features.
Assuming the PDFs are zero-mean, which is achievable by subtracting the mean, implies that the impostor PDF has an unit variance.
Recall, that $X_i$ and $Y_i$ are normally distributed, hence we get $X_i \sim \mathcal{N}(0,1)$ and $Y_i \sim \mathcal{N}(0,1)$.
To perform a feature quantization on $n$ levels (called \emph{feature level}) we divide the 2D impostor PDF in an equiprobable manner so that all bins will have the same probability, thus an arbitrary feature observation is just as likely to land on any of those bins.
This is done by determining the bins' borders following Algorithm~\ref{alg:bins} where $\text{ICDF}(p,0,1)$ is the inverse cumulative distribution function of a $\mathcal{N}(0,1)$ at the cumulative probability $p$ and it returns the value associated with $p$. 
\begin{algorithm}[!h]
  \SetAlgoLined
  \KwIn{$n$ feature quantization level }
  \KwOut{$Bn$ array containing the bins' borders}
    $Bn$ array of size $n-1$\;
   \For{$j\leftarrow 1$ \KwTo $n-1$}{
     $p = j/n$ \;
     $Bn[j] = \text{ICDF}(p,0,1)$\;   
   }
   \caption{Procedure to determine the bins' borders}
   \label{alg:bins}
\end{algorithm}

$a_i\in Bn_{a_i}$ (resp. $b_i\in Bn_{b_i}$) denotes the measured value for feature $x_i$ (resp $y_i$) from the first (resp. second) sample and is quantized to $\hat{a_i}$ (resp. $\hat{b_i}$) following Algorithm~\ref{alg:fq} using the same $Bn$ bins' borders array of the i$^{\text{th}}$ feature.
\begin{algorithm}[!h]
  \SetAlgoLined
  \KwIn{$a_i$ raw feature value of the i$^{\text{th}}$ feature and $Bn$ array containing the bins' borders of the i$^{\text{th}}$ feature}
  \KwOut{$\hat{a_i}$ quantized value}
    \For{$j\leftarrow 1$ \KwTo $n-1$}{
     \lIf{$a_i < Bn[j]$}{
      \KwRet{$j-1$}
     }  
   }
   \KwRet{$n-1$}
   \caption{Feature quantization on $n$ feature levels}
   \label{alg:fq}
\end{algorithm}

\paragraph{Score Quantization}\label{par:sq}
As we are dealing with 2D distributions, we partition the impostor PDF and the genuine PDF according to an $n \times n$ grid using $Bn$ for both axes $x_i$ and $y_i$; see the red grid Figure \ref{fig:helr}.
For a cell located at $(\hat{a}_i,\hat{b}_i)$, where $a_i$ and $b_i$ the measured values, we compute the genuine probability distribution (see Equation~\eqref{eq:gendistCell}) inside that cell by calculating the area under the curve delimited by its borders $Bn_{a_i}$ and $ Bn_{b_i}$, see the dotted surface depicted in Figure~\ref{fig:helr}.
Based on Equation~\eqref{eq:gendist} we hence get:
\begin{equation}\label{eq:gendistCell}
  P_{X_{i}, Y_{i}}(\hat{a}_i,\hat{b}_i|\text{gen}) = \int_{Bn_{b_i}} \int_{Bn_{a_i}} f_{X_{i}, Y_{i}} (x_i,y_i) \,dx_{i}\,dy_{i}
\end{equation}
Note that $Bn_{a_i}$ and $Bn_{b_i}$ have one of the three forms $\big]-\infty, Bn[1] \big[, \big[ Bn[j], Bn[j+1] \big[$ or $\big[ Bn[n-1], +\infty \big[$.\\
For the impostor distribution, all cells have identical probability since it was equiprobably divided:
\begin{equation}\label{eq:impdistCell}
  P_{X_{i}, Y_{i}}(\hat{a}_i,\hat{b}_i|\text{imp}) = \frac{1}{n \times n}
\end{equation}
After that, we calculate the LLR for that cell $(\hat{a}_i,\hat{b}_i)$ and place the resulted non-quantized score in a lookup table at row $\hat{a}_i$ and column $\hat{b}_i$.
As described in Section~\ref{subsection:elgamal}, homomorphic encryption requires integers and the resulted LLR scores are real-valued.
We perform a second quantization to map each real-valued non-quantized score to an integer that we call \emph{quantized score} by dividing the real-valued score by a quantization step $\Delta$ and rounding the result to the nearest integer to yield the quantized score $s(\hat{a}_i,\hat{b}_i)$.
\begin{figure}[!h]
  \centering
  \includegraphics[width=\columnwidth]{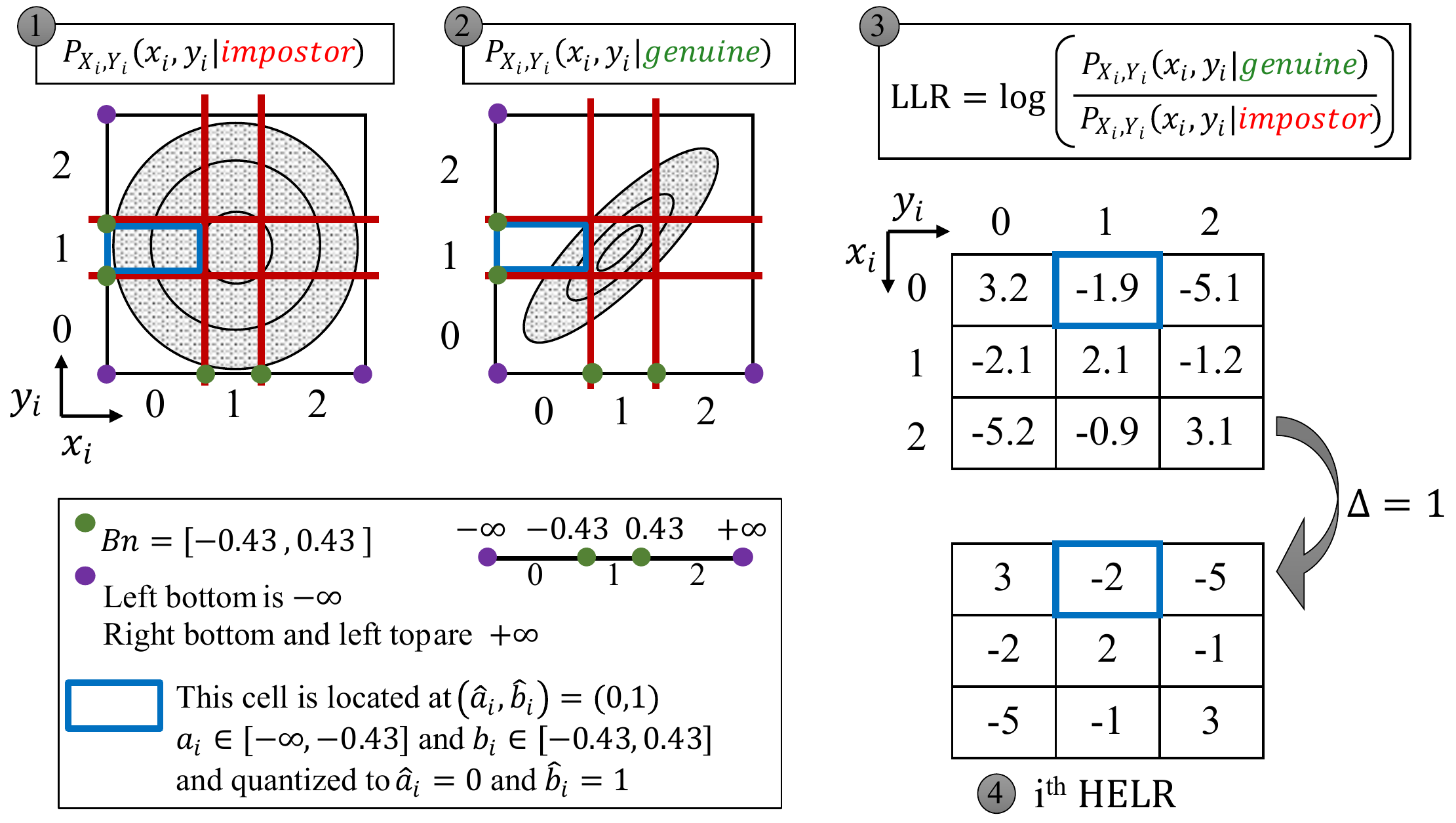}
  \caption{Generation of the i$^{\text{th}}$ HELR lookup table using a feature level $n=3$. First the red $n \times n$ grid equiprobably partitions the impostor PDF with respect to $B_{n}$ along both axes. Then the same grid is applied on the genuine PDF. Subsequently, the LLR is computed then quantized using $\Delta = 1$ and stored in the i$^{\text{th}}$ HELR table. The blue cell is an example of a sample $(a_i,b_i) = (-0.25, 0.31)$. }\label{fig:helr} 
\end{figure}

\paragraph{HELR Lookup tables}
For each feature, we generate an $n \times n$ HELR lookup table where its cells contain the quantized score resulted from a row (resp. column) that refers to the quantized feature value of the first (resp. second) feature vector.
To calculate the similarity score of two feature vectors $\mat{a}=(a_1,\cdots, a_k)$ and $\mat{b}=(b_1,\cdots, b_k)$ using the HELR tables, we map each feature $a_i$ (resp. $b_i$) to its quantized value $\hat{a}_i$ (resp. $\hat{b}_i$) and select its corresponding score $s(\hat{a}_i,\hat{b}_i)$ from the i$^{\text{th}}$ HELR table, the value at location row $\hat{a}_i$ and column $\hat{b}_i$.
Based on Equation~\eqref{eq:finalscore} we calculate the final score as $ S = \sum_{i=1}^{k} s(\hat{a}_i,\hat{b}_i)$. 
Recall, that we can sum the individual scores since the features are assumed to be independent.

Since the HELR lookup tables are generated from a dataset representative of the relevant population, we assume that they are public and accessible by any party namely the client and the server.
The comparison outcome is determined by the final score that is calculated using sensitive biometric data. 
Thus, all values that are involved in this calculation are sensitive. 
Hence, row and column position as well as the individual scores and the final score must be protected.
In the following sections, we aim to perform biometric verification under encryption using the HELR tables where the biometric data is kept encrypted throughout the process and only the comparison outcome (match or no match) is revealed.

\section{Proposed Verification Protocols}\label{section:protocols}

Our final goal is to achieve security against both malicious client and malicious server.
We first design a biometric verification protocol that ensures zero-biometric information leakage secure against semi-honest client and server.
Then, we modify this construction to force them to behave honestly.
Thus, we obtain a protocol secure against malicious client and server ensuring both the correctness of the comparison outcome and zero-biometric information leakage.
In both scenarios, the server must not learn the probe, the unprotected template, the individual scores, the final score and the comparison outcome.
The same requirements apply to the client except for the probe and the comparison outcome that it should be able to learn.

In the following, we suppose that the client and the server respectively hold the key pairs $(pk_{clt}, sk_{clt})$ and $(pk_{ser}, sk_{ser})$ from which the threshold ElGamal public key $pk_{joint}$ is calculated; see Section~\ref{subsection:elgamal}.
The HELR lookup tables, the comparison threshold $\theta$ and the maximum score $S_{\max}$ are public knowledge.
We assume that the initial enrollment process is performed in a fully controlled environment.

\subsection{Protocol Secure Against Semi-Honest Adversaries}\label{subsection:shprotocol}
Prior to a biometric verification, a user should enroll in order to register his template.
In this semi-honest construction, a new user $u\text{ID}$ presents his biometric modality to the client that first extracts a $k$-dimensional feature vector $(f_1, \cdots, f_k)$. 
Then for the $i$-th feature, the client selects the $f_i$-th row from the $i$-th HELR lookup table to form the user's template $\mat{T_{u\text{ID}}}$; which can be seen as a vector of vectors.
\begin{equation}\label{eq:templateSH}
  \mat{T_{u\text{ID}}} = \left(  \left( s_{f_i,j}^{i}\right)_{j\in [1,n]} \right)_{i\in [1,k]}
\end{equation}
Where $s_{f_i,j}^{i}$ is the score at the intersection of row $f_i$ and column $j$ from the HELR lookup table $i$.
Finally, the client encrypts $\mat{T_{u\text{ID}}}$ using $pk_{joint}$ then along with $u\text{ID}$ sends the encrypted template $ \ct{\mat{T_{u\text{ID}}}} $ to the server who stores them for later retrieval.
\begin{equation}\label{eq:encryptedtemplateSH}
  \ct{\mat{T_{u\text{ID}}}} = \left(  \left( \ct{s_{f_i,j}^{i}} \right)_{j\in [1,n]} \right)_{i\in [1,k]}
\end{equation}
Figure~\ref{fig:shverification} describes our verification protocol secure against semi-honest adversaries where the veracity of a user claiming an identity $u\text{ID}$ is assessed.
After extracting a $k$-dimensional feature vector $\mat{P}$ from the acquired live biometric modality, the client requests, from the server, the user's corresponding encrypted template $\ct{\mat{T_{u\text{ID}}}}$.
Recall, from Section~\ref{subsection:elgamal}, that in the additively homomorphic ElGamal, multiplication under encryption is equivalent to addition in the plain domain.
That is, the client selects and multiplies the encrypted individual scores to form the final score $\ct{S}$ then re-randomizes it before sending it back to the server.
This re-randomization prevents the server from guessing the selected individual scores from the encrypted template since the multiplication of two ciphertexts twice yields the same ciphertext.
\begin{figure}[!h]
  \centering
  \includegraphics[width=\columnwidth]{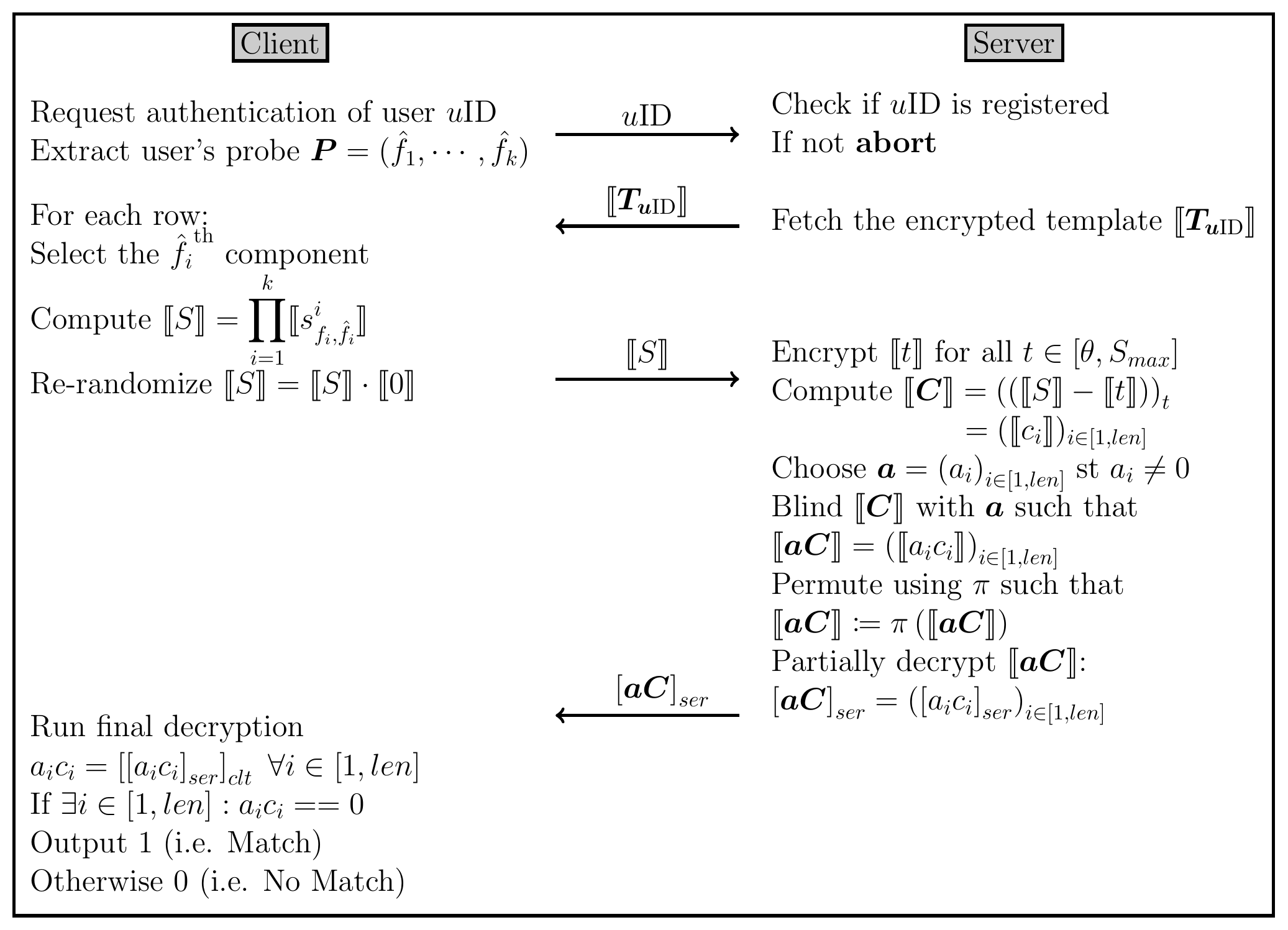}
  \caption{Biometric Verification Protocol Secure Against Semi-Honest Adversaries}
  \label{fig:shverification}
\end{figure}

To determine whether the encrypted final score is above or below the threshold $\theta$, the server needs to perform the comparison under encryption.
One would think of subtracting only $\ct{\theta}$ from $\ct{S}$, however this yields a zero only when both are equal and non-zero value when $S$ is above or below.
To solve this, one should compare $S$ with the integers between $\theta$ and $S_{\max}$. 
If $ \theta \leq S $ then exactly one single subtraction results in $0$, and if $S < \theta$ then all the subtractions are unequal to $0$.
In addition to that, a multiplicative blinding must be applied to these subtractions to protect the final score $S$ when a decryption occurs.    
Note that the multiplicative blinding preserves only the $0$ in contrast with the values unequal to $0$ which become random.

Applying this, the server encrypts the integers between $\theta$ and $S_{\max}$ under $pk_{joint}$ then computes the blinded comparison vector $\ct{\mat{aC}}$ from the comparison vector $\ct{\mat{aC}}$ and $\mat{a}$ a vector of random values.
After that, it applies a random permutation $\pi$ to $\ct{\mat{aC}}$ and partially decrypts the blinded-permuted comparison vector and sends $\left[\mat{ aC }\right]_{ser}$.
The client runs the final decryption on $\left[\mat{ aC }\right]_{ser}$ to retrieve the plain values of the blinded-permuted comparison vector $\mat{aC}$. 
If he finds a value $a_ic_i == 0$ in $\mat{aC}$ that means $S$ is equal or greater than the threshold $\theta$.
In this case, the client outputs \emph{match}. 
If all values in $\mat{aC}$ are different from zero that means $S$ is strictly below the threshold $\theta$.
In this case, the client outputs \emph{no match}.

\textbf{Limitations:}
This verification protocol is suitable for a client and server that trust each other in terms of the correctness of the exchanged messages.
However, in an untrusted setting this protocol leads to serious security threats.
Consider the case of a malicious client that can arbitrarily deviate from the protocol.
Since the system's threshold $\theta$ is public, he can encrypt $\theta$ using the joint public key $pk_{joint}$ to receive a blinded-permuted comparison vector $\mat{aC}$ that contains a zero; thus he succeeds in forcing a \emph{match}.
For the case of a malicious server, instead of sending the real encrypted template, he can craft a template of the form $((0,\cdots,0), \cdots, (1,\cdots, n), \cdots, (0,\cdots,0))$, fixing the individual scores of the i$^{\text{th}}$ feature to $(1,\cdots, n)$ and the individual scores of the remaining features to zero.
He then encrypts the crafted template with his public key $pk_{ser}$ and sends it to the client; who will send back a sum of the encrypted individual scores.
By decrypting the sum, the server learns the value of the i$^{\text{th}}$ component of the probe, repeating this for all $k$ components reveals the probe.
Another attack could be: instead of sending the partial decryption of the permuted comparison vector $\mat{aC}$, it sends a non-zero values vector of the same length as $\mat{aC}$ and encrypted with the client's public key $pk_{clt}$.
Thus he forces a \emph{no match} to a genuine user since the decryption of $\mat{aC}$ yields a non-zero vector.

\subsection{Protocol Secure Against Malicious Adversaries}

To address the limitations mentioned in Section~\ref{subsection:shprotocol}, we transform the construction in Figure~\ref{fig:shverification} into a protocol secure against both malicious client and malicious server using adapted ZK-proofs (see Section~\ref{subsubsection:adaptedZKs}).

\subsubsection{Adapted ZK-proofs from $\Sigma$-protocols}\label{subsubsection:adaptedZKs}
We construct three adapted $\Sigma$-protocols (see Table \ref{tab:adaptedSigma}) that we turn into non-interactive ZK-proofs using Fiat-Shamir transformation (see Table \ref{tab:adaptedZKs}) and provide their proofs in Theorem~\ref{theo:deczero}, Theorem~\ref{theo:blind} and Theorem~\ref{theo:parial}.

\begin{table*}[h!]
  \begin{center}    
    \caption{Adapted $\Sigma$-protocols}\label{tab:adaptedSigma}
    \begin{tabular}{l|l|l|l}
      \toprule 
      & $\Sigma_{\text{DecZero}}$
      & $\Sigma_{\text{Blind}}$
      & $\Sigma_{\text{Partial}}$\\ 
      & for $ [m_1] - [m_2] = (u,v)$
      & for $ [m]^r= (u^r,v^r) = (a,b)$
      & for $[m]_i=(u, v\cdot u^{-sk_i})= (u,c)$\\ 
      \midrule 
      Commitment& $X = g^{r'}, U = u^{r'}$ & $U = u^{r'}, V = v^{r'}$&  $X = g^{r'}, U = u^{r'}$ \\
      \midrule 
      Challenge& $e\in \left\{0,1\right\}^t $ & $e\in \left\{0,1\right\}^t $ & $e\in \left\{0,1\right\}^t $ \\
      \midrule
      Response&  $z= r'+e \cdot s $& $z= r'+e \cdot r $&  $z= r'+e \cdot sk_i $\\
      \midrule
      Verification Equations  & $g^{z} =^{?} X \cdot k^{e}$ & $u^{z} =^{?} U \cdot a^{e}$ & $g^{z} =^{?} X \cdot pk_{i}^{e}$\\
      & $u^{z} =^{?} U \cdot v^{e}$ & $v^{z} =^{?} V \cdot b^{e}$ & $u^{z} =^{?} U \cdot (v \cdot c^{-1})^{e}$\\
      \bottomrule 
    \end{tabular}
  \end{center}
\end{table*}

\begin{theorem}\label{theo:deczero}
  The protocol $\Sigma_{\text{DecZero}}$ in Table~\ref{tab:adaptedSigma} satisfies the three requirements of a $\Sigma$-protocol namely completeness, special soundness and special honest verifier zero knowledge.
\end{theorem}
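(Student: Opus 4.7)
The plan is to recognise $\Sigma_{\text{DecZero}}$ as an instance (in disguise) of a Chaum--Pedersen style equality-of-discrete-logarithms proof: the prover is demonstrating knowledge of an $s$ satisfying simultaneously $k = g^{s}$ and $v = u^{s}$. The second equality is exactly the statement that $(u,v) = [m_{1}] - [m_{2}]$ decrypts to zero, because then $v \cdot u^{-s} = g^{0} = 1$. With this reading, the three $\Sigma$-protocol properties all fall out of standard template arguments; my plan is simply to instantiate each template carefully using the two verification equations of the table.

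For completeness I would substitute the honest response $z = r' + e\cdot s$ into both checks and verify $g^{r'+es} = g^{r'}\cdot(g^{s})^{e} = X\cdot k^{e}$ and $u^{r'+es} = u^{r'}\cdot(u^{s})^{e} = U\cdot v^{e}$; the second identity uses precisely the hypothesis that $(u,v)$ is an encryption of zero under public key $k$, so that $v = u^{s}$. For (2-)special soundness I would take two accepting transcripts $(X,U,e,z)$ and $(X,U,e',z')$ with $e\neq e'$ sharing the same commitment, divide the verification equations pairwise to obtain $g^{z-z'} = k^{e-e'}$ and $u^{z-z'} = v^{e-e'}$, and then output the extracted witness $s^{*} := (z - z')(e - e')^{-1} \bmod q$. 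Both relations $g^{s^{*}} = k$ and $u^{s^{*}} = v$ then hold by construction, so the extractor simultaneously recovers the secret key and certifies that the ciphertext decrypts to zero.

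For special honest-verifier zero-knowledge I would describe the simulator explicitly: on input a challenge $e \in \{0,1\}^{t}$, sample $z \leftarrow \mathbb{Z}_{q}$ uniformly and set $X := g^{z}\cdot k^{-e}$, $U := u^{z}\cdot v^{-e}$. The triple $(X,U,e,z)$ trivially satisfies both verification equations, and its distribution matches that of a real transcript because in a real run $r'$ is uniform on $\mathbb{Z}_{q}$, making $z = r' + e\cdot s$ uniform as well, which then fully determines $X$ and $U$ via the same two relations.

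The only delicate point I expect is ensuring that $e - e'$ is invertible modulo $q$ in the soundness argument; this is a bookkeeping issue handled by the standard parameter regime $2^{t} < q$ that is already implicit in the use of $\Sigma_{\text{Plain}}$ elsewhere in the paper, so it should not constitute a real obstacle. Everything else is essentially mechanical algebra in the group $\mathbb{G}$.
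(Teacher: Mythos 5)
Your proposal is correct and follows essentially the same route as the paper's proof: direct substitution for completeness, extraction of $s = (z_1 - z_2)/(e_1 - e_2)$ from two accepting transcripts for special soundness, and the standard simulator $X = g^{z}/k^{e}$, $U = u^{z}/v^{e}$ for special honest-verifier zero-knowledge. Your Chaum--Pedersen framing, the explicit check that the extracted witness also satisfies $u^{s^{*}} = v$, and the remark on the invertibility of $e - e'$ modulo $q$ are harmless refinements of the same argument.
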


\begin{proof}
  \textbf{Completeness:} For a prover that knows the underlying private key $s$ of the public key $k$, a verifier will always accept since: 
  \begin{align*}
    g^{z} &= g^{r'} \cdot (g^{s})^{e} = X \cdot k^{e}\\
    u^{z} &= u^{r'} \cdot (u^{s})^{e} = U \cdot v^{e}
  \end{align*}

  \textbf{Special soundness:} Suppose that $\left( X,U ; e_{1}; z_{1} \right)$ and $\left( X,U ; e_{2}; z_{2} \right)$ are two valid transcripts such that $e_{1} \neq e_{2}$.
  We have 
  \begin{equation*}
      \begin{cases}
        g^{z_1} &= X \cdot k^{e_1}\\
        g^{z_2} &= X \cdot k^{e_2}
      \end{cases}       
  \end{equation*}
  This implies that $g^{z_1-z_2} = k^{e_1-e_2}$ and $g^{\frac{z_1-z_2 }{e_1-e_2}} = k$.
  Since $s=log_{g}(k)$ then we succeed in extracting $s = \frac{z_1-z_2 }{e_1-e_2}$.

  \textbf{Special honest verifier Zero-Knowledge:}
  Consider $\mathcal{M}$ a simulator that is given an input $(u,v)$ and a challenge $e$.
  $\mathcal{M}$ operates as follows:
  \begin{enumerate}
    \item Chooses $z\in \mathbb{Z}_q $.
    \item Computes $X = g^{z} / k^{e}$ and $ U = u^{z} / v^{e}$.
    \item Outputs the transcript $\left( X,U ; e; z \right)$.
  \end{enumerate}
 Thus, $\mathcal{M}$ outputs a transcript of the same probability distribution as transcripts between the honest prover and verifier on common input $(u,v)$.

\end{proof}

\begin{theorem}\label{theo:blind}
  The protocol $\Sigma_{\text{Blind}}$ in Table~\ref{tab:adaptedSigma} satisfies the three requirements of a $\Sigma$-protocol namely completeness, special soundness and special honest verifier zero knowledge.
\end{theorem}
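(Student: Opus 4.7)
The plan is to follow the three-part template already laid out in the proof of Theorem~\ref{theo:deczero}, since $\Sigma_{\text{Blind}}$ is again a Schnorr-style $\Sigma$-protocol; the only structural difference is that the witness $r$ now controls a pair of parallel discrete-log relations $a = u^r$ and $b = v^r$, proved simultaneously with a single commitment pair $(U, V)$ and a single response $z$. Each of completeness, special soundness, and special honest-verifier zero-knowledge will therefore need to handle the two verification equations in tandem rather than separately.

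For completeness, I would substitute the honest response $z = r' + e \cdot r$ into each verification equation, using $U = u^{r'}$ and $V = v^{r'}$, and observe that both collapse to identities in one line each. For special honest-verifier zero-knowledge, the simulator $\mathcal{M}$ would receive $(u, v, a, b)$ together with a challenge $e$, sample $z$ uniformly from $\mathbb{Z}_q$, and then back-solve the verification equations by setting $U = u^z / a^e$ and $V = v^z / b^e$, outputting the transcript $(U, V; e; z)$. By construction the two checks are satisfied, and the distribution matches a real execution because in that execution $z = r' + e \cdot r$ is uniform in $\mathbb{Z}_q$ for random $r'$, with $U$ and $V$ determined from $z$ and $e$ in exactly the same deterministic way.

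The most delicate step is special soundness, because the two verification equations must be reconciled from the same pair of accepting transcripts. Given $(U, V; e_1; z_1)$ and $(U, V; e_2; z_2)$ with $e_1 \neq e_2$, dividing the first verification equations yields $u^{z_1 - z_2} = a^{e_1 - e_2}$; using that $q$ is prime one can invert $e_1 - e_2$ modulo $q$ and set $r^* = (z_1 - z_2)(e_1 - e_2)^{-1}$, obtaining $u^{r^*} = a$. Applying the identical calculation to the second pair of verification equations gives $v^{z_1 - z_2} = b^{e_1 - e_2}$ and hence $v^{r^*} = b$ for the very same $r^*$, because both equations share the challenges and responses. This coupling through a single $z$ is precisely what certifies that $r^*$ is a common blinding factor for both components, and it is the subtlety I would want to highlight rather than a computational hurdle.
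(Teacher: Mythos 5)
Your proposal is correct and follows essentially the same route as the paper's own proof: direct substitution for completeness, the standard transcript-division extraction of $r=(z_1-z_2)/(e_1-e_2)$ for special soundness, and a simulator that back-solves $U=u^{z}/a^{e}$, $V=v^{z}/b^{e}$ for special honest-verifier zero-knowledge. In fact your special-soundness step is slightly more thorough than the paper's, which extracts $r$ from the first verification equation only, whereas you also verify $v^{r^*}=b$ from the second equation, confirming that the single extracted exponent witnesses both components of the blinding relation.
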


\begin{proof}
  \textbf{Completeness:} For a prover that knows the underlying blinding random $r$ used to produce $(a,b)=(u^r,v^r)$, a verifier will always accept since: 
  \begin{align*}
    u^{z} &= u^{r'} \cdot (u^{r})^{e} = U \cdot a^{e}\\
    v^{z} &= v^{r'} \cdot (v^{r})^{e} = V \cdot b^{e}
  \end{align*}

  \textbf{Special soundness:} Suppose that $\left( U,V ; e_{1}; z_{1} \right)$ and $\left( U,V ; e_{2}; z_{2} \right)$ are two valid transcripts such that $e_{1} \neq e_{2}$.
  We have 
  \begin{equation*}
      \begin{cases}
        u^{z_1} &= U \cdot a^{e_1}\\
        u^{z_2} &= U \cdot a^{e_2}
      \end{cases}       
  \end{equation*}
  This implies that $u^{z_1-z_2} = a^{e_1-e_2}$ and $u^{\frac{z_1-z_2 }{e_1-e_2}} = a$.
  Since $r=log_{u}(a)$ then we succeed in extracting $r = \frac{z_1-z_2 }{e_1-e_2}$.

  \textbf{Special honest verifier Zero-Knowledge:}
  Consider $\mathcal{M}$ a simulator that is given an input $\left((u,v), (a,b)\right)$ and a challenge $e$.
  $\mathcal{M}$ operates as follows:
  \begin{enumerate}
    \item Chooses $z\in \mathbb{Z}_q $.
    \item Computes $U = u^{z} / a^{e}$ and $ V = v^{z} / b^{e}$.
    \item Outputs the transcript $\left( U,V ; e; z \right)$.
  \end{enumerate}
 Thus, $\mathcal{M}$ outputs a transcript of the same probability distribution as transcripts between the honest prover and verifier on common input $\left((u,v), (a,b)\right)$.

\end{proof}

\begin{theorem}\label{theo:parial}
  The protocol $\Sigma_{\text{Partial}}$ in Table~\ref{tab:adaptedSigma} satisfies the three requirements of a $\Sigma$-protocol namely completeness, special soundness and special honest verifier zero knowledge.
\end{theorem}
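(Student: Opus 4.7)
The plan is to follow the same three-part template already used in the proofs of Theorem~\ref{theo:deczero} and Theorem~\ref{theo:blind}, exploiting a key algebraic observation about the ciphertext $[m]_i = (u, c)$ with $c = v \cdot u^{-sk_i}$: rearranging gives $v \cdot c^{-1} = u^{sk_i}$. Hence $\Sigma_{\text{Partial}}$ is really a Chaum--Pedersen-style proof of equality of discrete logarithms, showing that $\log_g pk_i = \log_u (v \cdot c^{-1}) = sk_i$, with a common response $z = r' + e \cdot sk_i$ binding the two Schnorr-like relations together. Once this observation is made explicit, the three required properties follow mechanically, in parallel to the previous two theorems.

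For \textbf{completeness}, I would substitute $z = r' + e \cdot sk_i$ directly into both verification equations. The first gives $g^{z} = g^{r'} \cdot (g^{sk_i})^{e} = X \cdot pk_i^{e}$, and using the observation $v \cdot c^{-1} = u^{sk_i}$ the second gives $u^{z} = u^{r'} \cdot (u^{sk_i})^{e} = U \cdot (v \cdot c^{-1})^{e}$. Both hold unconditionally when the prover follows the protocol.

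For \textbf{special soundness}, I would take two accepting transcripts $(X, U; e_1; z_1)$ and $(X, U; e_2; z_2)$ with $e_1 \neq e_2$. The first verification equation then yields $g^{z_1 - z_2} = pk_i^{e_1 - e_2}$, from which we extract $sk_i = (z_1 - z_2)/(e_1 - e_2) \bmod q$. (The second verification equation yields the same value via $u^{z_1 - z_2} = (v \cdot c^{-1})^{e_1 - e_2}$, which is exactly what is needed to ensure the extracted witness is simultaneously the discrete log of $pk_i$ base $g$ and of $v \cdot c^{-1}$ base $u$, i.e., a bona fide partial-decryption key for $[m]_i$.)

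For \textbf{special honest-verifier zero-knowledge}, I would describe a simulator $\mathcal{M}$ on input $\left((u, v, c), pk_i\right)$ and challenge $e$ that samples $z \in \mathbb{Z}_q$ uniformly at random and sets $X = g^{z}/pk_i^{e}$, $U = u^{z}/(v \cdot c^{-1})^{e}$, outputting $(X, U; e; z)$. By construction both verification equations are satisfied, and since $z$ is uniform the induced distribution on $(X, U, z)$ matches that of honest transcripts. The only subtlety worth flagging is verifying that the two verification equations are consistent under this simulation strategy, which they are precisely because the same $z$ is used to produce both $X$ and $U$, mirroring the honest prover. I do not anticipate a genuinely hard step here; the main care point is simply making the identity $v \cdot c^{-1} = u^{sk_i}$ explicit early on so that the soundness extraction and the simulator construction are clearly tied to the same secret $sk_i$.
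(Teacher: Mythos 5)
Your proposal is correct and follows essentially the same route as the paper's own proof: direct substitution for completeness, extraction of $sk_i = \frac{z_1-z_2}{e_1-e_2}$ from two accepting transcripts for special soundness, and the standard simulator computing $X = g^{z}/pk_i^{e}$ and $U = u^{z}/(v \cdot c^{-1})^{e}$ for special honest-verifier zero-knowledge. Your explicit framing of the identity $v \cdot c^{-1} = u^{sk_i}$ as a Chaum--Pedersen equality-of-discrete-logs statement is a nice clarification but does not change the argument.
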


\begin{proof}
  \textbf{Completeness:} For a prover that knows the underlying partial private key $sk_i$ of the joint key $pk_{joint}$ also the underlying private key of the public key $pk_i$, a verifier will always accept the proof on the input $\left((u,v), (u,c) \right)$ since: 
  \begin{align*}
    g^{z} &= g^{r'} \cdot (g^{sk_i})^{e} = X \cdot pk_i^{e}\\
    u^{z} &= u^{r'} \cdot (u^{sk_i})^{e} = U \cdot (v \cdot c^{-1})^{e} 
  \end{align*}
  This holds since $c = v \cdot u^{-sk_i}$.

  \textbf{Special soundness:} Suppose that $\left( X,U; e_{1}; z_{1} \right)$ and $\left( X,U ; e_{2}; z_{2} \right)$ are two valid transcripts such that $e_{1} \neq e_{2}$.
  We have 
  \begin{equation*}
      \begin{cases}
        g^{z_1} &= X \cdot pk_i^{e_1}\\
        g^{z_2} &= X \cdot pk_i^{e_2}
      \end{cases}       
  \end{equation*}
  This implies that $g^{z_1-z_2} = pk_i^{e_1-e_2}$ and $g^{\frac{z_1-z_2 }{e_1-e_2}} = pk_i$.
  Since $sk_i=log_{g}(pk_i)$ then we succeed in extracting $sk_i = \frac{z_1-z_2 }{e_1-e_2}$.

  \textbf{Special honest verifier Zero-Knowledge:}
  Consider $\mathcal{M}$ a simulator that is given an input $\left((u,v), (u,c), pk_i\right)$ and a challenge $e$.
  $\mathcal{M}$ operates as follows:
  \begin{enumerate}
    \item Chooses $z\in \mathbb{Z}_q $.
    \item Computes $X = g^{z} / pk_i^{e}$ and $ U = u^{z} / (v \cdot c^{-1})^{e}$.
    \item Outputs the transcript $\left( U,V ; e; z \right)$.
  \end{enumerate}
 Thus, $\mathcal{M}$ outputs a transcript of the same probability distribution as transcripts between the honest prover and verifier on common input $\left((u,v), (u,c), pk_i\right)$.

\end{proof}

\textbf{Adapted ZK-proofs:} 
In order to check and track the correctness of the computations over ElGamal encrypted data, we use three non-interactive ZK-proofs (see Table \ref{tab:adaptedZKs}) constructed from the three $\Sigma$-protocols (see Table \ref{tab:adaptedSigma}) using the Fiat-Shamir transformation where $\mathcal{H}: \mathbb{G}  \rightarrow   \left\{0,1\right\}^{t}$ is a hash function. 
In our construction, we use an AND proof of ZK-proofs, as shown in~\cite{hazay2010efficient}, using the same challenge for all individual ZK-proofs.

\begin{table*}[h!]  
    \centering
    \caption{Adapted Non-Interactive Zero-Knowledge proofs (NIZKs)}\label{tab:adaptedZKs}
    \begin{tabular}{l|l|l|l}
      \toprule 
      & $\text{NIZK}_{\text{DecZero}}$
      & $\text{NIZK}_{\text{Blind}}$
      & $\text{NIZK}_{\text{Partial}}$\\ 
      & for $ [m_1] - [m_2] = (u,v)$
      & for $ [m]^r= (u^r,v^r) = (a,b)$
      & for $[m]_i=(u, v\cdot u^{-sk_i})= (u,c)$\\ 
      \midrule 
      Commitment& $X = g^{r'}, U = u^{r'}$ & $U = u^{r'}, V = v^{r'}$&  $X = g^{r'}, U = u^{r'}$ \\
      \midrule 
      Challenge& $e = \mathcal{H} \left( k, X,U  \right)$& $e = \mathcal{H} \left( k, U, V  \right)$ & $e = \mathcal{H} \left( pk_i, X, U \right)$\\
      \midrule
      Response&  $z= r'+e \cdot s $& $z= r'+e \cdot r $&  $z= r'+e \cdot sk_i $\\
      \midrule
      Verification Equations& $ e =^{?} \mathcal{H} \left( k, X,U  \right)$ & $ e =^{?} \mathcal{H} \left( k, U, V  \right)$& $e = \mathcal{H} \left( pk_i, X, U \right)$\\
      & $g^{z} =^{?} X \cdot k^{e}$ & $u^{z} =^{?} U \cdot a^{e}$ & $g^{z} =^{?} X \cdot pk_{i}^{e}$\\
      & $u^{z} =^{?} U \cdot v^{e}$ & $v^{z} =^{?} V \cdot b^{e}$ & $u^{z} =^{?} U \cdot (v \cdot c^{-1})^{e}$\\
      \bottomrule 
    \end{tabular}
  
\end{table*}

\subsubsection{Verification Protocol Secure Against Malicious Adversaries}

For the transformation, we introduce a trusted entity, which is merely involved during the enrollment, called \emph{enrollment server} that holds a signature key pair $(v_{enr},s_{enr})$. 
Unlike the enrollment in our protocol Figure~\ref{fig:shverification}, the client has an additional key pair $(k_{clt},s_{clt})$ for an additively homomorphic ElGamal encryption and generates $k$ pseudo-random permutations (PRP) $ \left(\pi_{1}, \cdots, \pi_{k} \right)$. 
The template is formed differently as well.
After selecting the corresponding rows as in Equation \eqref{eq:templateSH}, the client performs the following modifications on Equation \eqref{eq:encryptedtemplateSH}.
The component $\ct{s_{f_i,j}^{i}}$ becomes a vector that contains the encryption, under $k_{clt}$, of its column position (i.e. $\left[ j \right]$) and an index $r_{i,j} = \pi_{i}(j)$ by which this component will be located later.
The components of the i$^{\text{th}}$ vector are ordered according to the indexes $r_{i,j}$.
Thus $\ct{\mat{T'_{u\text{ID}}}}$ becomes
\begin{equation}\label{eq:encryptedtemplateMAL}
  \ct{\mat{T'_{u\text{ID}}}} = \left(  \left( r_{i,j}, \left[ j \right], \ct{s_{f_i,j}^{i}} \right)_{j\in [1,n]} \right)_{i\in [1,k]}
\end{equation}
The client then sends $\ct{\mat{T'_{u\text{ID}}}}$ to the enrollment server who appends to each component two signatures: $\sigma_{i,j} = \text{Sign} \left( s_{enr}, r_{i,j}, [j], u\text{ID}  \right)$ that binds the index $r_{i,j}$ with encryption of the column position $j$ and $u\text{ID}$; and $\alpha_{i,j} = \text{Sign} \left( s_{enr}, [j], \ct{ s_{f_i,j}^{i} }, u\text{ID} \right)$ that binds $[j]$ with $\ct{ s_{f_i,j}^{i} }$ and $u\text{ID}$.
Those signatures ensure the authenticity of the template during the verification.
The final protected template has the form:

\begin{equation}\label{eq:encryptedtemplateMAL}
  \ct{\mat{T_{u\text{ID}}}} = \left(  \left( r_{i,j}, \left[ j \right], \ct{s_{f_i,j}^{i}},\sigma_{i,j},\alpha_{i,j} \right)_{j\in [1,n]} \right)_{i\in [1,k]}
\end{equation}

Besides, the enrollment server generates a permutation $\pi_{thr}$ to form $\mat{\Theta}$ the \emph{permuted-encrypted threshold vector} under $pk_{joint}$.
\begin{equation}\label{eq:thresholdvect}
  \mat{\Theta} = \left( \ct{ \pi_{thr}(i) } \right)_{i\in [\theta,S_{\max}]}
\end{equation}
Note that only the enrollment server knows the order of the plain values of $\mat{\Theta}$.
Finally, it sends $\mat{\Theta}$ to the client and $u\text{ID}$, $\ct{\mat{T_{u\text{ID}}}}$  and $\mat{\Theta}$ to the server that stores them for later retrieval.
This template's structure facilitates the transition from a semi-honest construction to one that is more suitable to withstand malicious adversaries.

Figure~\ref{fig:mal-biometric-verification} describes our biometric verification protocol secure against malicious adversaries.
Unlike in our protocol Figure~\ref{fig:shverification}, in step \circled{2}, the client sends its probe encrypted $\left[ \mat{P} \right]$ and proves the knowledge of the underlying plain probe. 
He also sends the corresponding indexes $\mat{R}$ to allow the server to locate the desired components.
The server, in step \circled{3}, sends the first half of the components to allow the client to prove that the requested components correspond to the ones that appear in the protected template.
Once the server is convinced, he then sends, in step \circled{4}, the second half that contains the encrypted scores.
Before engaging in any proof, the client checks the authenticity of the received data by verifying the signatures $\sigma_i$ and $\alpha_i$, in steps \circled{3} and \circled{4}. 
In step \circled{4}, with respect to the same order determined by $\mat{R}$ as well as $\mat{\Theta}$, both the client and the server compute $\ct{S}$ and $\ct{\mat{C}}$.
Note that here they must have the same resulted values with the same randomness.
Next only the server blinds the comparison vector $\ct{\mat{C}}$ using a vector of random values $\mat{a}$ to get $\ct{\mat{aC}}$ and partially decrypts it as $\left[\mat{ aC }\right]_{ser}$.
He then proves to the client that $\ct{\mat{aC}}$ is a blind version of $\ct{\mat{C}}$ and that $\left[\mat{ aC }\right]_{ser}$ is its partial decryption. 
If the client is convinced, he runs the final decryption and parses the blinded comparison vector $\mat{aC}$.
If there is a zero, he outputs a \emph{match} otherwise outputs a \emph{no match}.  

\begin{figure*}[!h]
  \centering
  \includegraphics[width=\textwidth]{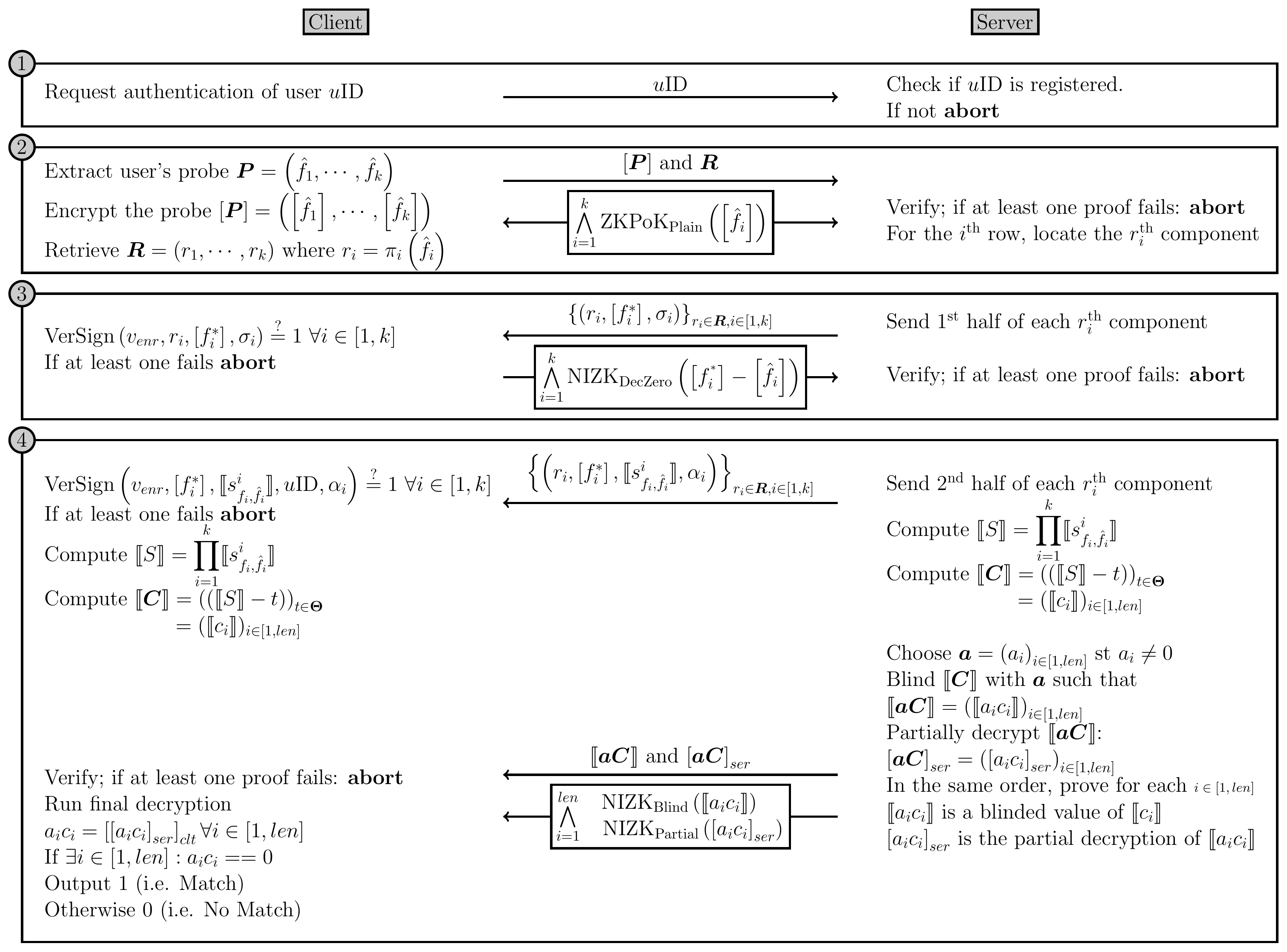}
  \caption{Biometric Verification Protocol Secure Against Malicious Adversaries}
  \label{fig:mal-biometric-verification}
\end{figure*}

\section{Security Analysis}
As mentioned in Section~\ref{section:protocols}, the enrollment is performed in a fully controlled environment.
Therefore, we only discuss the security of verification protocols by separately analyzing the case of compromised client and the case of compromised server.
Regardless noise measurement, the correctness of both protocols is straightforward.
For a perfectly captured user's probe, an honest client who is interacting with an honest server will yield a \emph{match} if the user is genuine and a \emph{no match} if the user is an impostor.

\subsection{Security Proof of Protocol~\ref{fig:shverification}}
\textbf{Semi-honestly compromised client:} he receives first the template (\ref{eq:encryptedtemplateSH}) that is encrypted under threshold ElGamal; which means that the client can not decrypt on his own to learn the scores.
However, he may tend to use the public HELR lookup tables together with the encrypted template and to learn which rows were encrypted.
Thanks to the IND-CPA property of ElGamal, the client is unable to link the i$^{\text{th}}$ vector of $\ct{\mat{T_{u\text{ID}}}}$ to any row of the i$^{\text{th}}$ public HELR lookup table. 
In the second round he receives the partial decryption of the comparison vector that was blinded and permuted by the server.
Again the client fails in attempting to infer any significant information from the comparison vector.
Given that $\mat{aC}$ was blinded and permuted by the server, the only leaked information here is the comparison outcome. 
In case of a match the client will find a value zero in a random position of the comparison vector.
In case of a no match, all values he will obtain are random; thus in both cases he can not learn the final score $S$.
Moreover, the client may try to combine both the encrypted $\ct{\mat{T_{u\text{ID}}}}$ and the blinded-permuted comparison vector $\mat{aC}$; again nothing can be inferred thanks to the robustness of threshold ElGamal, the applied permutation and the randomness introduced by the blinding.

\textbf{Semi-honestly compromised server:} regardless the user's identity, it receives only one message from the client that is the threshold encrypted final score $\ct{S}$; so the server can not decrypt on his own.
Given the fact that $\ct{S}$ is only a multiplication of the chosen, accordingly to the probe, encrypted individual scores from the template, the server may try to select some encrypted values from the template and compare their multiplication with the received $\ct{S}$. 
This comparison is meaningless since the client re-randomizes $\ct{S}$ before sending it and thanks to the IND-CPA property of ElGamal the server is unable to learn any information about the probe.

Both cases demonstrate that no biometric information is leaked except the comparison result; which is the output of the protocol. 
Therefore, our protocol in Figure~\ref{fig:shverification} is secure in the semi-honest model.

\subsection{Security Proof of Protocol~\ref{fig:mal-biometric-verification}}

\begin{theorem}\label{theo:mal}
  Assume that $\mathcal{H}$ is a collision resistant hash function and the signature scheme is EUF-CMA secure then the protocol in Figure~\ref{fig:mal-biometric-verification} is secure in the ( $\mathpzc{f}_{\text{ZKPoK}}^{R_{\text{Plain}}} $, $\mathpzc{f}_{\text{NIZK}}^{R_{\text{DecZero}}}$, $\mathpzc{f}_{\text{NIZK}}^{R_{\text{Blind}}}$, $\mathpzc{f}_{\text{NIZK}}^{R_{\text{Partial}}}$)-Hybrid world in the presence of malicious adversary.
\end{theorem}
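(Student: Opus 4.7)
The plan is to prove the theorem by exhibiting, for each of the two single-party corruptions considered in Canetti's two-party framework, a PPT ideal-world simulator $\mathcal{S}_{\mathcal{A}}$ that uses the hybrid adversary $\mathcal{A}$ as a black box and produces a joint transcript computationally indistinguishable from the real one. Because we work in the $(\mathpzc{f}_{\text{ZKPoK}}^{R_{\text{Plain}}}, \mathpzc{f}_{\text{NIZK}}^{R_{\text{DecZero}}}, \mathpzc{f}_{\text{NIZK}}^{R_{\text{Blind}}}, \mathpzc{f}_{\text{NIZK}}^{R_{\text{Partial}}})$-hybrid model, the simulator controls these ideal ZK oracles: it obtains the witness of any proof $\mathcal{A}$ submits via the extractor, and produces valid honest-party proofs through the ZK simulator without knowing the real witness. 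I will then show indistinguishability by a standard hybrid argument that replaces the honest party's messages one at a time.

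For a malicious client $P_1$, the simulator $\mathcal{S}_C$ launches $\mathcal{A}$ and, when $\mathcal{A}$ invokes $\mathpzc{f}_{\text{ZKPoK}}^{R_{\text{Plain}}}$ with its probe $\ct{\mat{P}}$, extracts the underlying plaintext $\mat{P}^{\star}$. It also reads the retrieval indices $\mat{R}$. $\mathcal{S}_C$ forwards $(u\text{ID}, \mat{P}^{\star})$ to the trusted party $\mathbb{T}$, which returns the bit $b \in \{\text{match}, \text{no match}\}$. Since enrollment is trusted, $\mathcal{S}_C$ has access to the genuine signed template and replays its components in steps \circled{3} and \circled{4} exactly as the honest server would. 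For the partially decrypted blinded vector $[\mat{aC}]_{ser}$ of step \circled{4}, $\mathcal{S}_C$ samples fresh random non-zero entries if $b = \text{no match}$, or inserts a single zero at a uniformly random position (and fresh non-zero entries elsewhere) if $b = \text{match}$. The companion $\text{NIZK}_{\text{Blind}}$ and $\text{NIZK}_{\text{Partial}}$ proofs are produced through the ZK-simulator of the ideal functionalities. Indistinguishability follows since in the real execution the multiplicative blinding by a random $\mat{a}$ and the server's random permutation turn the non-zero coordinates of $\mat{C}$ into i.i.d.\ uniform non-zero elements and place any zero at a uniformly random index.

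For a malicious server $P_2$, the simulator $\mathcal{S}_S$ replaces the real probe by a threshold-ElGamal encryption of an arbitrary fixed vector and produces its $\text{ZKPoK}_{\text{Plain}}$ proof through the ZK simulator. On receiving $\mathcal{A}$'s template halves, $\mathcal{S}_S$ performs exactly the signature and hash-equality checks prescribed in steps \circled{3} and \circled{4}, aborting on failure; EUF-CMA security of the signature scheme ensures that any component not issued during the honest enrollment triggers an abort except with negligible probability, and collision resistance of $\mathcal{H}$ likewise rules out forged NIZK transcripts. $\mathcal{S}_S$ then uses the extractors of $\mathpzc{f}_{\text{NIZK}}^{R_{\text{Blind}}}$ and $\mathpzc{f}_{\text{NIZK}}^{R_{\text{Partial}}}$ to recover the blinding factor and the partial key from $\mathcal{A}$'s proofs, so that it can detect any deviation from the honest server strategy and abort if necessary. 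The server receives no protocol output, so its view consists only of the simulated probe encryption, the simulated proof and its own messages, and IND-CPA security of threshold ElGamal together with the ZK property of the idealized proofs yields indistinguishability.

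The main obstacle I anticipate is the malicious-client case, and concretely the step that argues the simulated $[\mat{aC}]_{ser}$ is distributed as in the real world regardless of the (possibly malformed) probe the client submits. One has to rule out that $\mathcal{A}$ can steer the honest server into producing a $\mat{C}$ with several zeros or a zero whose position conveys information about $\mat{P}^{\star}$ beyond the single bit $b$; this is handled by observing that, once the probe has been extracted, the plaintext comparison vector is completely determined, its number of zeros is exactly one when $b = \text{match}$ and zero otherwise, and the server's independent permutation makes the index of that zero uniform over $\{1,\dots,S_{\max}-\theta+1\}$. Combined with a hybrid walk that replaces, in turn, each honest-party ciphertext by a fresh encryption and each honest-party NIZK by its simulated counterpart, and appealing to IND-CPA of threshold ElGamal for each replacement, this delivers the required computational indistinguishability.
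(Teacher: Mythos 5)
Your proposal follows essentially the same route as the paper's proof: for each corruption you build a hybrid-world simulator that reads the witnesses the adversary submits to the ideal ZK functionalities, extracts the probe and queries the trusted party for the match bit in the malicious-client case, fakes the blinded comparison vector with no zero or a single zero (encrypted so the corrupted client can complete the decryption), answers the ideal-proof queries with accept, verifies the signed template components via EUF-CMA in the malicious-server case while sending a dummy encrypted probe and random indices, and concludes by IND-CPA of (threshold) ElGamal. One caveat: in the malicious-model protocol there is no per-verification server permutation (that exists only in the semi-honest protocol) --- the uniformity of the zero's position comes from the enrollment server's secret permutation $\pi_{thr}$ embedded in $\mat{\Theta}$, and the probe is encrypted under the client's own key $k_{clt}$ rather than the threshold key --- so your distributional argument should be restated in those terms, though the conclusion and the overall simulation are unaffected.
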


\begin{proof}
\textbf{Maliciously compromised client:} 
Let $\mathcal{A}$ be an adversary operating in the ($\mathpzc{f}_{\text{ZKPoK}}^{R_{\text{Plain}}} $, $\mathpzc{f}_{\text{NIZK}}^{R_{\text{DecZero}}}$, $\mathpzc{f}_{\text{NIZK}}^{R_{\text{Blind}}}$, $\mathpzc{f}_{\text{NIZK}}^{R_{\text{Partial}}}$)-Hybrid world that is controlling the client, we construct $\mathcal{S}_{\mathcal{A}}$ an adversary operating in the ideal world that uses $\mathcal{A}$ as a subroutine. 
$\mathcal{S}_{\mathcal{A}}$ is given the description of $\mathcal{A}$, the probe $\mat{P}$, the user's identity $u\text{ID}$, permuted-encrypted threshold vector $\mat{\Theta}$ \eqref{eq:thresholdvect}, $v_{enr}$, $k_{clt}$, $pk_{clt}$, $pk_{joint}$, $sk_{clt}$ and $k_{clt}$. 
Moreover, $\mathcal{S}_{\mathcal{A}}$ is also given the public input of the server, i.e. the set of templates $\left\{\ct{\mat{T_{u\text{ID}}}}\right\}_{u\text{ID}}$ \eqref{eq:encryptedtemplateMAL} and $pk_{ser}$.

During the simulation, $\mathcal{S}_{\mathcal{A}}$ will be playing the role of the honest server and the trusted party that computes the functionalities used in the hybrid world.
We describe $\mathcal{S}_{\mathcal{A}}$ as follows:

\circled{1} $\mathcal{S}_{\mathcal{A}}$ receives $u\text{ID}$ from $\mathcal{A}$ and verifies if $u\text{ID}$ has been enrolled before if not \textbf{abort}.

\circled{2} $\mathcal{S}_{\mathcal{A}}$ receives $[\mat{P_{\mathcal{A}}}]$ and $\mat{R_{\mathcal{A}}}$ from $\mathcal{A}$ and a request containing the statements $[\mat{P_{\mathcal{A}}}]$ and the witness $\mat{P_{\mathcal{A}}} = \left( \widehat{f_{i,\mathcal{A}}} \right)_{i \in [1,k]}$ to be sent to $\mathpzc{f}_{\text{ZKPoK}}^{R_{\text{Plain}}} $.
At this stage, $\mathcal{S}_{\mathcal{A}}$ sends $u\text{ID}$ and $\mat{P_{\mathcal{A}}}$ to the trusted party and receives back $b \in \{0,1\}$ the public output of the client.
Here $0$ means \emph{no match} and $1$ means \emph{match}.

$\mathcal{S}_{\mathcal{A}}$ verifies whether the statements are consistent with the witness if not \textbf{abort}.
According to $\mat{R_{\mathcal{A}}}$ and user's $u\text{ID}$ template, $\mathcal{S}_{\mathcal{A}}$ sends to $\mathcal{A}$ the first half of the requested components extracted from the set of templates $\left\{\ct{\mat{T_{u\text{ID}}}}\right\}_{u\text{ID}}$.

\circled{3} $\mathcal{S}_{\mathcal{A}}$ receives the statements $[\mat{P_{\mathcal{A}}}]$ and $\left( \left[ f^{*}_{i} \right] \right)_{i\in [1,k]}$ and witnesses that $\mathcal{A}$ sends to $\mathpzc{f}_{\text{NIZK}}^{R_{\text{DecZero}}}$.
$\mathcal{S}_{\mathcal{A}}$ verifies whether the statements are consistent with the witness if not \textbf{abort}.
$\mathcal{S}_{\mathcal{A}}$ sends the second half of the requested components to $\mathcal{A}$.

\circled{4} At this stage there are two cases to distinguish depending on
the output $b \in \{0, 1\}$

\textbf{Case $b=0$:} $\mathcal{S}_{\mathcal{A}}$ generates a random vector $ \mat{aC_{\mathcal{S}_{\mathcal{A}}}} =\left( t_i \right)_{i\in [1,len]}$ such that $\forall i \in [1, len] \ \ t_i \neq 0$ then encrypts it first using $pk_{joint}$ to get $\ct{ \mat{aC_{\mathcal{S}_{\mathcal{A}}}}} $ then encrypts it using $pk_{clt}$ to get $ \left[ \mat{aC_{\mathcal{S}_{\mathcal{A}}}} \right]_{ser}$ so that $\mathcal{A}$ will succeed in decrypting it and the resulted vector will contain non-zero values. 
This is possible since in ElGamal a partial decryption under $sk_{ser}$ yields an encryption under $pk_{clt}$ and the final decryption of an ElGamal threshold encryption is the decryption of an ElGamal non-threshold encryption.

\textbf{Case $b=1$:} $\mathcal{S}_{\mathcal{A}}$ generates a random vector $ \mat{aC_{\mathcal{S}_{\mathcal{A}}}} =\left( t_i \right)_{i\in [1,len]}$ such that $t_j = 0$ and $\forall i \neq j : \ t_i \neq 0$ then encrypts it first using $pk_{joint}$ to get $\ct{ \mat{aC_{\mathcal{S}_{\mathcal{A}}}}} $ then encrypts it using $pk_{clt}$ to get $ \left[ \mat{aC_{\mathcal{S}_{\mathcal{A}}}} \right]_{ser}$ so that $\mathcal{A}$ will succeed in decrypting it and the resulted vector will contain exactly one zero.
This is also possible for the same reasons mentioned in case $b=0$.

Since the simulator can not partially decrypt on behalf of the server, the only alternative for him is to encrypt using $pk_{clt}$.	
Then sends both vectors $\ct{ \mat{aC_{\mathcal{S}_{\mathcal{A}}}}} $ and $ \left[ \mat{aC_{\mathcal{S}_{\mathcal{A}}}} \right]_{ser}$ to $\mathcal{A}$ who requests the answer of $\mathpzc{f}_{\text{NIZK}}^{R_{\text{Blind}}}$ and $\mathpzc{f}_{\text{NIZK}}^{R_{\text{Partial}}}$ on these statements.
$\mathcal{S}_{\mathcal{A}}$ hands 1 to $\mathcal{A}$ and outputs whatever $\mathcal{A}$ does.

Given the fact that $\mathcal{S}_{\mathcal{A}}$ outputs whatever $\mathcal{A}$ does, what remains to be proven is the indistinguishability of $\mathcal{A}$'s view in both worlds: ($\mathpzc{f}_{\text{ZKPoK}}^{R_{\text{Plain}}} $, $\mathpzc{f}_{\text{NIZK}}^{R_{\text{DecZero}}}$, $\mathpzc{f}_{\text{NIZK}}^{R_{\text{Blind}}}$, $\mathpzc{f}_{\text{NIZK}}^{R_{\text{Partial}}}$)-Hybrid and ideal (simulation).
$\mathcal{A}$'s view consists of sent-and-received  messages. 
Here the difference lies in the fact that the received messages were generated by the honest server in the real execution while, during the simulation, they were generated by the simulator $\mathcal{S}_{\mathcal{A}}$.

The steps \circled{1}, \circled{2} and \circled{3} are identical to the real execution since the only thing $\mathcal{S}_{\mathcal{A}}$ sends is some components from the template.

In step \circled{4}, in the real execution, the honest server always forms valid proofs. 
However in the simulation, $\mathcal{S}_{\mathcal{A}}$ hands $\mathcal{A}$ fake valid proofs according to the output that it has received from the trusted party.
As a result, the indistinguishability is maintained since $\mathcal{A}$ in both worlds receives the blinded vector encrypted using $pk_{joint}$ (IND-CPA property of the threshold ElGamal) and a ciphertext decryptable under its own private key $sk_{clt}$ (IND-CPA property of non-threshold ElGamal) and 1 as an answer from $\mathpzc{f}_{\text{NIZK}}^{R_{\text{Blind}}}$ and $\mathpzc{f}_{\text{NIZK}}^{R_{\text{Partial}}}$.

Therefore, ($\mathpzc{f}_{\text{ZKPoK}}^{R_{\text{Plain}}} $, $\mathpzc{f}_{\text{NIZK}}^{R_{\text{DecZero}}}$, $\mathpzc{f}_{\text{NIZK}}^{R_{\text{Blind}}}$, $\mathpzc{f}_{\text{NIZK}}^{R_{\text{Partial}}}$)-Hybrid and simulation are indistinguishable.

\textbf{Maliciously compromised server:}
Let $\mathcal{A}$ be an adversary operating in the ($\mathpzc{f}_{\text{ZKPoK}}^{R_{\text{Plain}}} $, $\mathpzc{f}_{\text{NIZK}}^{R_{\text{DecZero}}}$, $\mathpzc{f}_{\text{NIZK}}^{R_{\text{Blind}}}$, $\mathpzc{f}_{\text{NIZK}}^{R_{\text{Partial}}}$)-Hybrid world controlling the server, we construct $\mathcal{S}_{\mathcal{A}}$ an adversary operating in the ideal world using $\mathcal{A}$ as a subroutine. 
$\mathcal{S}_{\mathcal{A}}$ is given the description of $\mathcal{A}$, set of templates $\left\{\ct{\mat{T_{u\text{ID}}}}\right\}_{u\text{ID}}$ (\ref{eq:encryptedtemplateMAL}), permuted-encrypted threshold vector $\mat{\Theta}$ (\ref{eq:thresholdvect}), $v_{enr}$, $pk_{ser}$, $pk_{joint}$ and $sk_{ser}$. 
Moreover, $\mathcal{S}_{\mathcal{A}}$ is also given the public input of client $u\text{ID}$ the user's identity, $k_{clt}$ and $pk_{clt}$.

During the simulation, $\mathcal{S}_{\mathcal{A}}$ will be playing the role of the honest client and the trusted party that computes the functionalities used in the hybrid world.
We describe $\mathcal{S}_{\mathcal{A}}$ as follows:

\circled{1} $\mathcal{S}_{\mathcal{A}}$ verifies if $u\text{ID}$ has been enrolled before if not \textbf{abort}.	

\circled{2} $\mathcal{S}_{\mathcal{A}}$ generates a random probe $ \mat{P_{\mathcal{S}_{\mathcal{A}}}}=\left( \hat{f^*_i} \right)_{i\in [1,k]}$ as a $k-$dimensional feature vector and encrypts it using $k_{clt}$ to get $ \left[ \mat{P_{\mathcal{S}_{\mathcal{A}}}} \right]$.
$\mathcal{S}_{\mathcal{A}}$ generates another random vector $\mat{ R_{\mathcal{S}_{\mathcal{A}}} } = (r^*_i)_{i\in [1,k]}$ where $r_i \in [1, n]$. 
Then sends $u\text{ID}$, $ \left[ \mat{P_{\mathcal{S}_{\mathcal{A}}}} \right]$ and $\mat{ R_{\mathcal{S}_{\mathcal{A}}} }$ to $\mathcal{A}$.
$\mathcal{S}_{\mathcal{A}}$ receives the statement $[\mat{P_{\mathcal{S}_{\mathcal{A}}}}]$ that $\mathcal{A}$ sent to $\mathpzc{f}_{\text{ZKPoK}}^{R_{\text{Plain}}} $ and hands 1 to $\mathcal{A}$.

\circled{3} $\mathcal{S}_{\mathcal{A}}$ receives from $\mathcal{A}$ the first half of the requested components and verifies the signatures $\sigma_i$. 
If at least one of them fails \textbf{abort} otherwise \textbf{continue}. 
$\mathcal{S}_{\mathcal{A}}$ receives the statements $[\mat{P_{\mathcal{S}_{\mathcal{A}}}}]$ and $([f^{*}_{i}])_{i\in [1,k]}$ that $\mathcal{A}$ sent to $\mathpzc{f}_{\text{NIZK}}^{R_{\text{DecZero}}}$ then hands 1 to $\mathcal{A}$.

\circled{4} $\mathcal{S}_{\mathcal{A}}$ receives the second half of the requested components and verifies the signatures $\alpha_i$. If at least one of them fails \textbf{abort} otherwise \textbf{continue}.
$\mathcal{S}_{\mathcal{A}}$ receives the statements $\llbracket S \rrbracket$, $ \bigl \llbracket \mat{C} \bigr \rrbracket $, $ \bigl \llbracket\mat{ aC } \bigr \rrbracket $, $ \left[ \mat{ aC } \right]_{ser} $ along with their proper witnesses that $\mathcal{A}$ sent to $\mathpzc{f}_{\text{NIZK}}^{R_{\text{Blind}}}$ and $\mathpzc{f}_{\text{NIZK}}^{R_{\text{Partial}}}$.
$\mathcal{S}_{\mathcal{A}}$ checks the statements and their witnesses, if at least one of them is not correct \textbf{abort} otherwise $\mathcal{S}_{\mathcal{A}}$ outputs whatever $\mathcal{A}$ does.

Since $\mathcal{S}_{\mathcal{A}}$ outputs whatever $\mathcal{A}$ does, what remains to be proven is the indistinguishability of $\mathcal{A}$'s view in both worlds: ($\mathpzc{f}_{\text{ZKPoK}}^{R_{\text{Plain}}} $, $\mathpzc{f}_{\text{NIZK}}^{R_{\text{DecZero}}}$, $\mathpzc{f}_{\text{NIZK}}^{R_{\text{Blind}}}$, $\mathpzc{f}_{\text{NIZK}}^{R_{\text{Partial}}}$)-Hybrid and ideal (simulation).
$\mathcal{A}$'s view consists of sent-and-received messages. 
Here the difference lies in the fact that the received messages were generated by the honest client in the real execution while, during the simulation, they were generated by the simulator $\mathcal{S}_{\mathcal{A}}$.

In both executions, the steps \circled{1} and \circled{2} are indistinguishable.
Indeed, the IND-CPA property of ElGamal ensures the indistinguishability between the random encrypted probe $[\mat{P_{\mathcal{S}_{\mathcal{A}}}}]$ and the real encrypted probe $[\mat{P}]$.
The permutation used by the honest client is a PRP permutation. 
So in both worlds, $\mathcal{A}$ is unable to distinguish between $(r_i)_{i\in [1,k]}$ received from the honest client and $(r^*_i)_{i\in [1,k]}$ received from $\mathcal{S}_{\mathcal{A}}$.
$\mathcal{S}_{\mathcal{A}}$ has generated then encrypted the vector $\mat{P_{\mathcal{S}_{\mathcal{A}}}}$ correctly (as the honest client would do). 
As a consequence, it gives $\mathcal{A}$ the value 1 instead of querying the ideal functionality $\mathpzc{f}_{\text{ZKPoK}}^{R_{\text{Plain}}} $.

In step \circled{3}, $\mathcal{S}_{\mathcal{A}}$ checks the signatures $\sigma_i$ and if at least one of them fails he aborts as the honest client would do. 
Then $\mathcal{A}$ always receives 1 from $\mathcal{S}_{\mathcal{A}}$ as an answer for $\mathpzc{f}_{\text{NIZK}}^{R_{\text{DecZero}}}$ which is always the case in the hybrid execution where the client is assumed to be honest. 
As a result, step \circled{3} is indistinguishable in both executions.

In step \circled{4}, $\mathcal{S}_{\mathcal{A}}$ checks the signatures $\alpha_i$ and if the honest client would abort $\mathcal{S}_{\mathcal{A}}$ will also abort.
Then it checks the statements and the corresponding witnesses for $\mathpzc{f}_{\text{NIZK}}^{R_{\text{Blind}}}$ and $\mathpzc{f}_{\text{NIZK}}^{R_{\text{Partial}}}$. 
If at least one of them is incorrect he aborts as the honest client would do. 
If not it outputs whatever $\mathcal{A}$ does.

Therefore, ($\mathpzc{f}_{\text{ZKPoK}}^{R_{\text{Plain}}} $, $\mathpzc{f}_{\text{NIZK}}^{R_{\text{DecZero}}}$, $\mathpzc{f}_{\text{NIZK}}^{R_{\text{Blind}}}$, $\mathpzc{f}_{\text{NIZK}}^{R_{\text{Partial}}}$)-Hybrid and simulation are indistinguishable.

The cases client malicious and server malicious conclude our proof. 
\end{proof}

\section{Experimentation and Evaluation}
We used a $64$-bit computer Intel(R) Core $i7$-$8650$U CPU with $4$ cores ($8$ logical processors) rated at $2.11$GHz and $16$GB of memory. 
We used Python~$3.5.2$ to implement the HELR classifier and generated the lookup tables on the cluster.
For the protocols, we used Linux Ubuntu $18.04.4$ LTS run on Windows-SubLinux (WSL) on Windows $10$.

\subsection{Generation of HELR lookup tables}
Our approach supports any biometric modality that can be encoded as a fixed-length real-valued feature vector.
To show this we conducted experiments on \emph{dynamic signatures} (BMDB~\cite{ortega2009multiscenario} ) and \emph{faces} (PUT~\cite{kasinski2008put} and FRGC2.0 (Experiment 1, mask II)~\cite{phillips2005overview} datasets).

For the BMDB dataset, the initial features were extracted by the algorithm described in \cite{gomez2016implementation} and were rendered i.i.d. using PCA and LDA as described in \cite{bazen2004likelihood}.
Their dimensionality was reduced to $36$ and then they were split into a training set containing $1350$ feature vectors ($45$ users) and a test set containing $4800$ feature vectors ($160$ different users).
For the PUT dataset, the initial features were extracted by the VGG16 network \cite{parkhi2015deep} (the first layers' weights were taken from \cite{vggfaceweights} and the last layer was retrained for PUT) and were rendered i.i.d. as above.
Their dimensionality was reduced to $49$ and then they were split into a training set containing $1100$ feature vectors ($50$ users) and a test set containing $1095$ feature vectors ($50$ different users).
For the FRGC dataset, the initial features were extracted by the VGG-Face network \cite{vggfacedescriptor} and were rendered i.i.d. as above.
Their dimensionality was reduced to $94$ and then they were split into a training set containing $12776$ feature vectors ($222$ users) and a test set containing $16028$ feature vectors ($466$ users) with identity overlap between both sets of $153$ users.
Because of the identity overlap, we use FRGC only for evaluating the speed of our protocols on different HELR parameters.

We generated the HELR lookup tables using the training set and measured the HELR classifier performance using the test set.
Figure~\ref{fig:detcurves} depicts the DET curves of LLR and HELR on the three datasets generated based on the parameters in Table~\ref{tab:helrparams}. 
For those parameters, the HELR performs better than the LLR.
This can be justified by the data-driven nature of LLR which makes it prone to slightly overfitting on the data.
HELR addresses this model mismatch through applying quantization and rounding.
Plus, HELR achieves an EER of $2.4\%$ for BMDB outperforming \cite{gomez2016implementation} and an EER of $0.27\%$ for PUT and $0.25\%$ for FRGC that is similar to the state-of-the-art $0.2\%$~\cite{deng2019arcface}.

\begin{table*}[h!]  
    \centering    
    \caption{Parameters of HELR lookup tables generated from BMDB, PUT and FRGC datasets}    
    \label{tab:helrparams}
    \begin{tabular}{c|c|c|c|c|c|c|c|c|c|c}
      \toprule 
      \multirow{2}{*}{Dataset} &\multirow{2}{*}{ Modality}  & \multirow{2}{*}{\makecell{\#Features\\after LDA}} &\multirow{2}{*}{\makecell{HELR\\dimension}}  & \multirow{2}{*}{\makecell{Score step\\$\Delta $}}   
      &  \multirow{2}{*}{\makecell{Threshold $\theta$\\at $0.1 \% $ FMR }} & \multirow{2}{*}{$S_{\max}$  }     &  \multirow{2}{*}{\makecell{\#Genuine\\comparison}} & \multirow{2}{*}{\makecell{\#Impostor\\comparison}} & \multirow{2}{*}{EER }
      &  \multirow{2}{*}{\makecell{FNMR\\ at $0.1 \% $ FMR }} \\
      &&&&& &&&&&\\
      \midrule 
      BMDB & Signature & $36$ & $16 \times 16$ & $0.5$ & $14$ & $99$ & $6.96 \times 10^4$ 
      & $1.14 \times 10^7$  &  $2.4 \% $ &  $11.07 \% $ \\
      \midrule 
      PUT & Face & $49$ & $64 \times 64$ & $1$ & $-53$ & $82$ & $1.14 \times 10^4$ 
      & $5.87 \times 10^5$  &  $0.27 \% $ &  $0.81 \% $ \\
      \midrule 
      FRGC & Face & $94$ & $64 \times 64$ & $1.5$ & $-1$ & $73$ & $3.99 \times 10^5 $ 
      &  $1.28 \times 10^8$  &  $0.25 \% $ &  $0.49 \% $ \\
      \bottomrule 
    \end{tabular}  
\end{table*}

\begin{figure}[!h]
  \centering
  \includegraphics[width=\columnwidth]{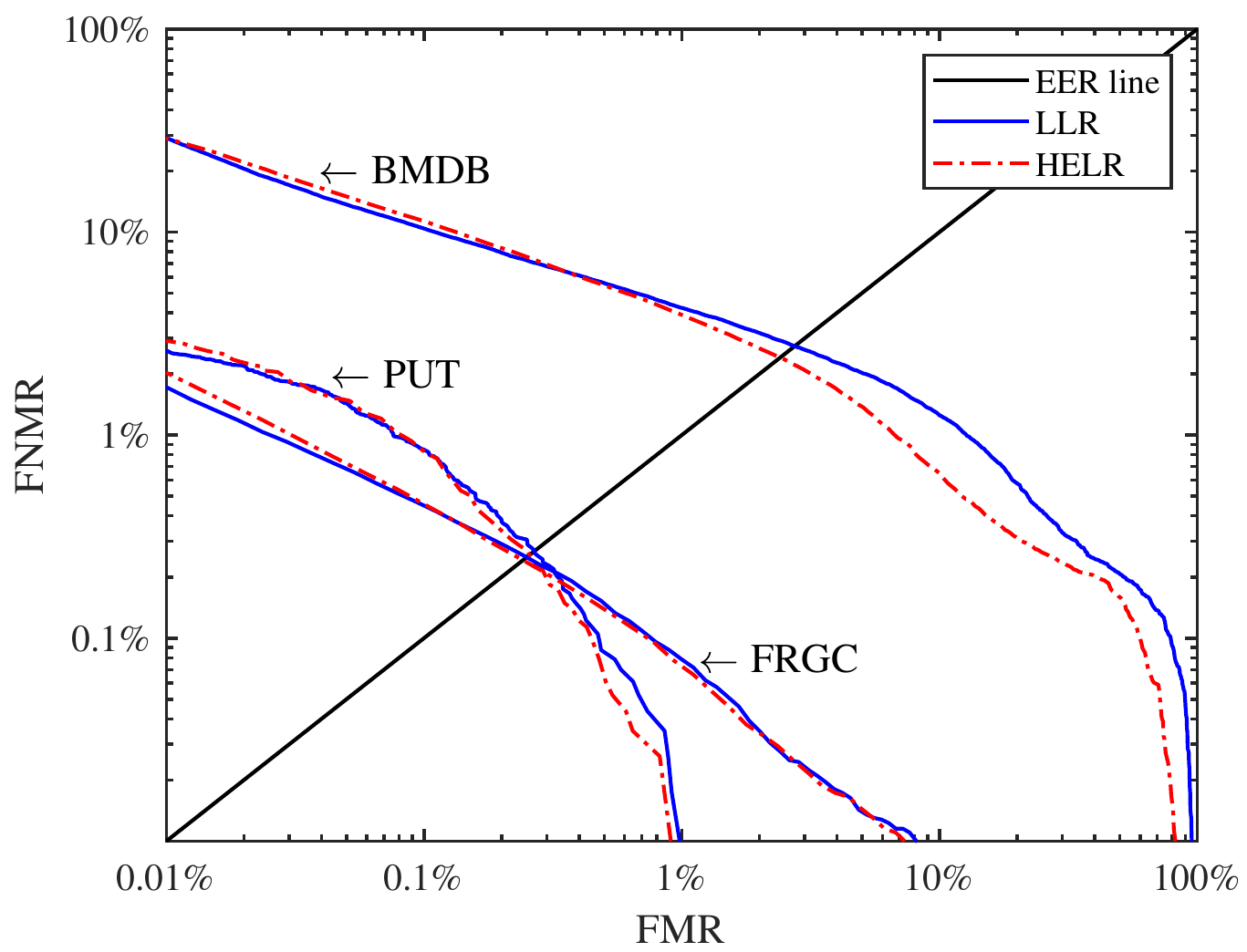}
  \caption{DET curves of BMDB, PUT and FRGC datasets}\label{fig:detcurves}
\end{figure}

\subsection{Verification Protocols}
We implemented the prototype of both protocols as described in Figure~\ref{fig:shverification} and Figure~\ref{fig:mal-biometric-verification} following the client/server architecture in C++ using libscapi library~\cite{bar2016libscapi} for secure two-party computation and OpenMP~\cite{dagum1998openmp} to parallelize the implementation.

Figure~\ref{fig:verifSHandMALruntime} shows the runtime, per elliptic curve, of $500$ genuine verifications via protocols \ref{fig:shverification} and \ref{fig:mal-biometric-verification} tested on the three datasets and their median runtime is given in Table~\ref{tab:summaryTable}.
The runtime increases as the size of the elliptic curve's prime increases implying the increase of the security strength.
According to \cite{bluekrypt}, the security strength of P$192$, P$224$ and P$256$ corresponds respectively to $96$, $112$ and $128$ bits.
We recall that in the last step of our protocols, the client decrypts a permuted vector and makes his comparison decision as soon as the first zero is found.
The server does not learn the comparison outcome as he is not contacted again after he sends this vector.  
The decryption of a permuted vector justifies the boxplots overall outliers in Figure~\ref{fig:verifSHandMALruntime}. 
The runtime of a false non-match (genuine) or a true non-match (impostor) is slower, approximately the upper outliers of the boxplots in Figure~\ref{fig:verifSHandMALruntime}, since the vector will be fully parsed searching for a zero. 
Among HELR parameters (see Table \ref{tab:helrparams}), the comparison vector length ($S_{\max} - \theta + 1$) has more influence on the runtime since the test on FRGC shows the fastest runtime although it has the largest parameters except for the vector length.
This can be justified by the expensiveness of the decryption operation. 
\begin{figure*}[!h]
  \centering
  \includegraphics[width=\linewidth]{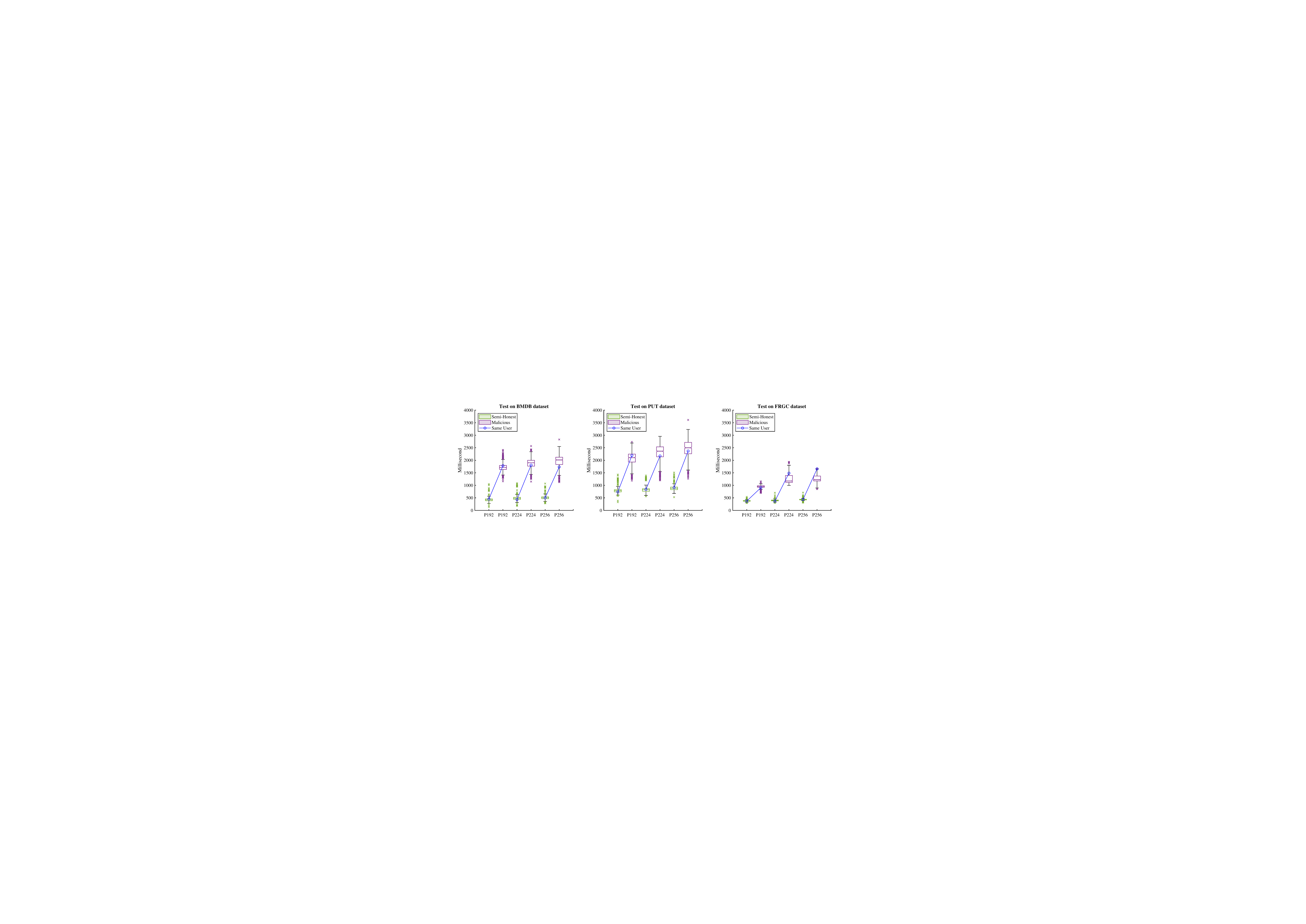}
  \caption{Runtime of Protocol \ref{fig:shverification} (green) and Protocol \ref{fig:mal-biometric-verification} (purple) on $500$ genuine verifications over three elliptic curves.} \label{fig:verifSHandMALruntime}
\end{figure*}

Table~\ref{tab:tempSize} shows that the template size in Protocol \ref{fig:shverification} is smaller than the one in Protocol \ref{fig:mal-biometric-verification} since it depends on the feature level, the number of features, the elliptic curve's prime and the adversarial model.
The template size could be decreased by optimizing its storage.  
\begin{table}[h!]
    \centering    
    \caption{Template size in MB }\label{tab:tempSize}
    \begin{tabular}{c|c|c|c|c|c|c}
      \toprule 
      \multirow{2}{*}{ \makecell{HELR dimension\\ and \#Features}}  & \multicolumn{3}{c|}{Protocol~\ref{fig:shverification}}& \multicolumn{3}{c}{Protocol~\ref{fig:mal-biometric-verification}}\\  
      \cmidrule{2-4} 
      \cmidrule{5-7} 
      & P$192$ & P$224$ & P$256$ & P$192$ & P$224$ & P$256$ \\    
      \midrule 
      $16 \times 16$ , $36$ & $0.13$  & $0.15$ & $0.17$ & $0.40$ & $0.47$ & $0.53$ \\
      \midrule 
      $64 \times 64$ , $49$ & $0.72$ & $0.84$ & $0.95$ & $2.20$ & $2.56$ & $2.91$ \\
      \midrule 
      $64 \times 64$ , $94$ & $1.34$ & $1.57$ & $1.79$ & $4.11$ & $4.78$ & $5.43$ \\
      \bottomrule 
    \end{tabular}
\end{table}

Table~\ref{tab:summaryTable} summarizes the performance of our work and previous work, as reported in the respective papers, in terms of the runtime, tested modality and security strength.
It is important here to emphasize that this table is not meant to compare detailed performance numbers but indicate the order of magnitudes as each protocol was tested on different machines (i.e. mobile, laptop, workstation, etc.).   
The top half of the table reports results of protocols belonging to the first category (semi-honest client and semi-honest server) while its bottom half concerns the second category (malicious client and semi-honest server) and the third category (malicious client and malicious server).
What stands out in this table is that the reported runtimes of the existing protocols are limited to a security strength of $80$ bits; which is no longer recommended by \cite{bluekrypt}.
We show here that even when considering security levels higher than what was considered in prior work, our solution for the semi-honest model (Protocol~\ref{fig:shverification}) runs similarly fast, in the order of hundreds of milliseconds, as existing work.
Also, our Protocol~\ref{fig:mal-biometric-verification} is the only solution that achieves security in the malicious model while maintaining a similar runtime as existing semi-honest solutions in terms of order of magnitude.

\begin{table*}[h!]   
    \centering
    \begin{threeparttable}[b]  
    \caption{Performance summary of our work and previous work as reported in the respective papers}   
    \label{tab:summaryTable}
    \begin{tabular}{c|c|c|c|c|c|c|c|c|c}      
      \toprule 
      \multirow{2}{*}{\makecell{Adversarial\\model}}   & \multirow{2}{*}{Protocol} & \multicolumn{2}{c|}{Modality}  & \multirow{2}{*}{ \makecell{\#Features}} & \multirow{2}{*}{ \makecell{Classification\\method}}
        & \multirow{2}{*}{\makecell{Cryptographic\\technique} }    & \multirow{2}{*}{ \makecell{Security\\strength~\cite{bluekrypt}} } & 
        \multicolumn{2}{c}{ \makecell{\multirow{2}{*}{Runtime (s) \tnote{$\star$}}} }   \\
      \cmidrule{3-4} 
      && Supported \tnote{$\dagger$} & Tested &&&&\\
      \midrule 
      \multirow{8}{*}{\makecell{SH Client\\SH Server}} &\makecell{\cite{cheon2016ghostshell}}& Binary & Iris & $2400$ & HD\tnote{1}  & \makecell{ BGV and\\MAC}  &  80 & \multicolumn{2}{c}{ \makecell{$0.10$ (Client)\\$0.47$ (Server)}  } \\  
      \cmidrule{2-10} 
      &\makecell{\cite{karabat2015thrive}} & Binary & Face & $192$ & HD\tnote{1}  & \makecell{THE\\Xor GM}  &  80 & \multicolumn{2}{c}{  \makecell{$0.25$ (Client)\\$0.76$ (Server)} } \\  
      \cmidrule{2-10} 
      &\multirow{6}{*}{\makecell{Our work \\ (Protocol \ref{fig:shverification})} } &  \multirow{6}{*}{Real-valued} & \multirow{3}{*}{\makecell{Dynamic\\Signature}} & \multirow{3}{*}{36}  & \multirow{6}{*}{HELR\tnote{6}}   & \multirow{6}{*}{\makecell{THE\\ElGamal}} &  $96 $ & \multicolumn{2}{c}{ $0.42$ } \\ 
                                                &&&& &&&   $112$ & \multicolumn{2}{c}{ $0.48$ } \\   
                                                &&&& &&& $128$ & \multicolumn{2}{c}{$0.50$} \\ 
                                                \cmidrule{4-5} 
                                                \cmidrule{8-10} 
                                                &&& \multirow{3}{*}{Face} & \multirow{3}{*}{\makecell{$49$ \tnote{$\ast$}  and $94$ \tnote{$\ddagger$}}} &&&  $96 $ & $0.79$\tnote{$\ast$} & $0.37$ \tnote{$\ddagger$}\\
                                                &&&& &&&   $112$ & $0.82$ & $0.39$ \\   
                                                &&&& &&& $128$ & $0.88$ & $0.43$\\ 
      \midrule 
      \multirow{6}{*}{\makecell{MAL Client\\SH Server}} & \makecell{\cite{shahandashti2015reconciling}} & Binary & \makecell{User's\\profile} & $1$ & AAD\tnote{2}  & Paillier & $80$  & \multicolumn{2}{c}{ \makecell{$1.13$ (Client)\\$0.04$ (Server)} }  \\
      \cmidrule{2-10} 
      & \makecell{\cite{vsedvenka2014secure}} & Real-valued & Touchscreen & $10$ & MD\tnote{3} & \makecell{GC, DGK and\\cut-and-choose}  &  Unknown & \multicolumn{2}{c}{ $3.70$ } \\ 
      \cmidrule{2-10} 
      & \makecell{\cite{gunasinghe2017privbiomtauth}} & Real-valued & Face & $60$ & SVM\tnote{4}  & ZK-proofs &  $80$ & \multicolumn{2}{c}{ $15.42$ } \\       
      \cmidrule{2-10} 
      & \makecell{\cite{im2020practical}} & Real-valued & Face  & $128$ & ED\tnote{5}  & HE &  $80$ & \multicolumn{2}{c}{ $1.07$ }\\ 
      \midrule    

      \multirow{6}{*}{\makecell{MAL Client\\MAL Server}} & \multirow{6}{*}{  \makecell{Our work \\ (Protocol \ref{fig:mal-biometric-verification})}} & \multirow{6}{*}{Real-valued} & \multirow{3}{*}{\makecell{Dynamic\\Signature}} & \multirow{3}{*}{36}  & \multirow{6}{*}{HELR\tnote{6}}   & \multirow{6}{*}{\makecell{THE\\ElGamal\\and ZK-proofs}} &  $96 $ & \multicolumn{2}{c}{ $1.71$ } \\ 
      &&&& &&&  $112$ & \multicolumn{2}{c}{ $1.90$ } \\   
      &&&& &&&  $128$ & \multicolumn{2}{c}{ $2.01$}\\ 
      \cmidrule{4-5} 
      \cmidrule{8-10} 
      &&& \multirow{3}{*}{Face}   & \multirow{3}{*}{\makecell{$49$ \tnote{$\ast$}  and $94$ \tnote{$\ddagger$}}} &&&  $96$ & $2.10$\tnote{$\ast$} & $0.95$ \tnote{$\ddagger$}\\
      &&&& &&&   $112$ & $2.36$ & $1.17$\\   
      &&&& &&&   $128$ & $2.50$ & $1.22$\\                                           
      \bottomrule 
    \end{tabular}
    \begin{tablenotes}
      \item[1] Hamming Distance
      \item[2] Absolute Average Deviation
      \item[3] Manhattan Distance
      \item[4] Support Vector Machine 
      \item[5] Euclidean Distance  
      \item[6] Homomorphically Encrypted log Likelihood-Ratio-based classifier      
      \item[$\dagger$] Supported modality representation can be either binary feature vector or a real-valued feature vector.
      \item[$\star$] This column represents, unless specified, the overall runtime of an authentication session without considering the feature extraction. The reported numbers are determined on different systems which makes them not directly comparable but only compared in terms of the orders of magnitude.
      \item[$\ast$] PUT dataset
      \item[$\ddagger$] FRGC dataset
    \end{tablenotes}
  \end{threeparttable}  
\end{table*}

\section{Related Work}\label{section:RelatedWork} 
Over the past decades, several approaches were developed to protect biometric data, known as Biometric Template Protection (BTP)~\cite{iso_iec_24745_2011}, most of them protect it at rest when it is stored in a biometric system. 
However, the increasing amount of online services requires a more sophisticated form of biometric solution that, for a given adversarial scenario, protects the biometric data involved in the entire process.
In this direction, various biometric systems have been proposed that can be categorized according to their adversarial model.

\textbf{Semi-honest client and semi-honest server:} this category assumes that the parties are properly following the protocol steps and try to learn the biometric information only out of the exchanged messages.
To prevent such biometric information leakage, the majority of those systems perform the biometric comparison in the encrypted domain.
For instance, \cite{upmanyu2010blind} uses a support vector machine (SVM) classifier for the comparison.
This classifier yields the final score as a linear combination of the template and the probe.
The idea consists of storing the encrypted template on the server using a multiplicative homomorphic encryption scheme.
Then, for the authentication, the client sends an encrypted probe to the server who calculates the linear combination under encryption and sends it randomized to the client who decrypts it for the server. 
After cancelling out the randomization, the server learns the final score and based on it makes its decision.
To achieve the same goal, another system \cite{chun2014outsourceable} follows a hybrid approach, which is a combination of additive homomorphic encryption and garbled circuits, to perform the comparison based on either Hamming distance or Euclidean distance.
A common drawback of those systems is that a decryption is performed in the middle of the comparison process; creating a cheating opportunity for a compromised client to change the decrypted result.
Ghostshell~\cite{cheon2016ghostshell} addressed that by requiring the client to decrypt the final score along with an encrypted message authentication code (MAC) tag to ensure the integrity of the decrypted score.
The server, then, makes his decision only for valid tags. 
Another limitation of those systems is that they expose the final score to the server which leaks the closeness of a probe to the template as well as the quality of a user's probe.
For instance, a server recording the final scores of its users after a successful match, is able to determine whom of them has the most stable biometric modality.
This makes such user an interesting target to attack.

\textbf{Malicious client and semi-honest server:} this category relaxes the client's adversarial model to the malicious setting assuming that an adversary can arbitrarily deviate from the protocol not only to infer biometric information from the template but also to compromise the security of the protocol.
Unlike the previous category, the client is asked to provide proofs of each interaction he makes with the server.
For example in \cite{shahandashti2015reconciling} and \cite{gunasinghe2017privbiomtauth}, the client appends a ZK-proof to each interaction; either to prove its encrypted probe's consistency or the correctness of its computations. 
Also \cite{vsedvenka2014secure} achieves this using garbled circuits empowered by cut-and-choose technique but at the cost of the protocol's efficiency.
Besides, \cite{im2020practical} uses HE that supports quadratic operations and blinds the final score.
They prove the security of their protocol by separating two types of malicious client.
Type $1$ tries to obtain the plain values of the template and type $2$ aims at being successfully authenticated without having any biometric information of the legitimate user.
Note that in their design, the comparison decision is made by the server who delivers it to the client (representing a single user), whereas in our design this decision is made by both and known only by the client who is managing multiple users.

\textbf{Malicious client and malicious server:}
\cite{simoens2012framework} and \cite{abidin2014security} have emphasized that semi-honest security is insufficient in the context of biometric recognition since it leads to severe security risks.
Systems, belonging to both previously discussed categories, store the biometric data encrypted on the server assuming that it will follow the computations as prescribed by the protocol.
The server, indeed, will learn nothing if it follows the protocol. 
However, an inevitable question arises: given the overwhelming cyber-security threats, can we guarantee that a server, holding encrypted biometric data who is interacting with a client holding raw biometric data, will not try to deviate arbitrarily from the protocol?
Here, the threat is even worse due to cyber attacks on such servers: the server operator might not have bad intentions but can be manipulated by an external attacker. 
Thus, such servers are promising targets by themselves.
This third category confronts any arbitrary deviation coming either from the server or the client.

Thus far, THRIVE\cite{karabat2015thrive} is the only work that tried to solve the problem of providing a biometric verification protocol for both client and server in the malicious model.
While the authors made some steps forward in this direction, the resulting protocol is unfortunately not secure in the malicious model as their proof follows the definition of the semi-honest model; see Definition $2.2.1$ in~\cite{hazay2010efficient}. 
In fact, they construct a simulator that does not take the adversary's input into consideration.
This is not sufficient since a malicious adversary may substitute the corrupted party's input in order to affect the correctness of the actual output.
In the following, we explain why THRIVE does not achieve security against malicious adversaries:
It performs the biometric recognition using Hamming distance between two binary vectors, namely the probe and the template.
Basically, the server stores the template bit-wise encrypted using a $(2,2)$-THE and signed by the client.
For the authentication, the client sends the probe randomized and receives the encrypted template along with its signature.
Subsequently, the client, homomorphically and bit-wise, combines the encrypted randomness, used to randomize the probe, with the encrypted template and partially decrypts the resulted encryption.
Then, it sends to the server the encrypted randomness, the partial decryptions and a signature that binds both.
Note that this signature ensures to the server that the received messages (i.e. the encrypted randomness and the partial decryptions) were sent by the client (i.e. authenticity) and received without being modified during transmission (i.e. integrity).  
Hence, this signature does not guarantee to the server that the client has sent the correct encrypted randomness and partial decryptions as prescribed by the protocol.
In other terms, the server is not sure about how the received messages were computed.
The server, on the other hand, sends its comparison decision to the client without employing any mechanism to prove the correctness of its decision.
In this case, it could skip the last computations that lead to the comparison decision and simply send arbitrary decisions to the client.
Thus, the simulator, in the proof of Theorem $1$ provided in \cite{karabat2015thrive}, does not capture the above-mentioned malicious behaviors.
This design flaw enables to a malicious client and a malicious server to send inconsistent messages that affect the correctness of the output.
Therefore, THRIVE is not secure against malicious adversaries.

\section{Conclusion}
In this paper, we propose a biometric verification protocol secure against malicious adversaries and evaluate its performance.
We showed that the proposed HELR classifier is accurate (EER between $0.25\%$ and $0.27\%$ for faces) and diverse in terms of supported biometric modalities that can be encoded as a fixed-length real-valued feature vector. 
This approach allowed us to speed up the biometric recognition in the encrypted domain by pre-computing the classifier and organizing it into very fast lookup tables that are generated for the purpose of applying a homomorphic encryption layer.
Further, we achieved security in the malicious model by strengthening our initial protocol that is secure in the semi-honest model using ZK-proofs.
The evaluation of our protocols shows runtimes between $0.37$s and $2.50$s for security strengths varying from $96$ to $128$ bits.
This makes us achieve a stronger security level for a runtime in the order of few seconds and demonstrates that the case of malicious client and malicious server is practical.

\section*{Acknowledgement}
This work is part of the research programme KIEM with project number ENPPS.KIEM.018.001, which is jointly financed by the Dutch Research Council (NWO) and GenKey Netherlands B.V.

We would like to express our deep gratitude to Marta Gomez-Barrero, Una Kelly and Ali Khodabakhsh for providing us with the feature vectors for BMDB, PUT and FRGC databases respectively.

\bibliographystyle{IEEEtran}
\bibliography{BVSecureAgainstMaliciousAdv_Bibtex}

\end{document}